\definecolor{shadecolor}{gray}{0.9}
\definecolor{ECCgreen}{RGB}{0, 153, 0}
\definecolor{EditPurple}{RGB}{160, 32, 240}
\renewcommand{\Tr}[1]{\mathrm{Tr}\left[#1\right]}
\newtheorem{theorem}{Theorem}
\newtheorem{lemma}[theorem]{Lemma}
\begin{document}
	
%\title{Composable Finite-Size Security and Implementation of Multi-User Continuous-Variable Quantum Key Distribution with Discrete Modulation}
\title{Composable Continuous-Variable Multi-User QKD with Discrete Modulation:\\ Theory and Implementation}
\author{Florian Kanitschar}
\email{florian.kanitschar@outlook.com}
\affiliation{AIT  Austrian  Institute  of  Technology,  Center  for  Digital  Safety\&Security,  Giefinggasse  4,  1210  Vienna, Austria}
\affiliation{Vienna Center for Quantum Science and Technology (VCQ), Atominstitut, Technische Universität Wien, Stadionallee 2, 1020 Vienna, Austria}

\author{Adnan A.E. Hajomer}
\email{aaeha@dtu.dk}
\affiliation{Center for Macroscopic Quantum States (bigQ), Department of Physics, Technical University of Denmark, 2800 Kongens Lyngby, Denmark}

\author{Michael Hentschel}
\affiliation{AIT  Austrian  Institute  of  Technology,  Center  for  Digital  Safety\&Security,  Giefinggasse  4,  1210  Vienna, Austria}

\author{Tobias Gehring}
\affiliation{Center for Macroscopic Quantum States (bigQ), Department of Physics, Technical University of Denmark, 2800 Kongens Lyngby, Denmark}

\author{Christoph Pacher}
\affiliation{AIT  Austrian  Institute  of  Technology,  Center  for  Digital  Safety\&Security,  Giefinggasse  4,  1210  Vienna, Austria}
\affiliation{fragmentiX Storage Solutions GmbH, Wohllebengasse 10, 1040 Vienna, Austria}
\date{\today}
	
%--------------------------------------------------------
%                   Abstract
%--------------------------------------------------------
\begin{abstract}
Establishing scalable, secure quantum networks requires advancing beyond conventional point-to-point quantum key distribution (QKD) protocols toward point-to-multipoint QKD protocols. Here, we generalize a well-established discrete-modulated continuous-variable (CV) QKD protocol from the point-to-point to the point-to-multipoint setting. We present a comprehensive security analysis across four trust scenarios and derive secret key rates for both loss-only and noisy channels, in the asymptotic and composable finite-size regimes. Experimentally, we validate the protocol in a passive optical network with $10$ km access links, achieving a composable secure key rate of $2.185 \times 10^{-3}$ bits per symbol ($0.273$ Mbit/s) against independent and identically distributed collective attacks. Our results demonstrate that discrete-modulated CV-QKD can support access networks with multiple users while relying solely on cost-efficient, off-the-shelf telecommunication components, paving the way toward practical, scalable, and secure quantum networks.  
%The conventional point-to-point setting of a Quantum Key Distribution (QKD) protocol typically considers two directly connected remote parties that aim to establish secret keys. This work proposes a natural generalization of a well-established point-to-point discrete-modulated continuous-variable (CV) QKD protocol to the point-to-multipoint setting. We develop a comprehensive security analysis covering four trust scenarios and derive secret key rates for both loss-only and noisy channels, in the asymptotic and the finite-size setting. To validate practicality, we implement the protocol in a passive optical network with $10$km long access links, achieving a secure key rate of $2.185 \times 10^{-3}$ bits per symbol or of $0.273$ Mbit/s. Our results show that discrete-modulated CV-QKD can support several dozens of users at urban-area distances, while leveraging cost-efficient, off-the-shelf telecommunication components. This establishes DM-CV QKD as a scalable and practical approach for secure multi-user quantum networks.
\end{abstract}
	
\maketitle

%\tableofcontents

%********************************************************
%                    Main text
%********************************************************

%--------------------------------------------------------
%                   Introduction
%--------------------------------------------------------
\section{Introduction \label{sec:Intro}} 
Quantum Key Distribution (QKD) \cite{Bennett_Brassard_1984, Ekert_1991} enables two remote parties to establish cryptographic keys with information-theoretic security. Among the various approaches, continuous-variable (CV) QKD protocols \cite{Ralph_1999, Usenko_2025} are particularly attractive because they can be implemented using standard telecommunication technology, such as quadrature modulation and coherent detection, at room temperature. This makes CV-QKD a promising candidate for widespread deployment in metropolitan networks. Nevertheless, while QKD allows the establishment of secure keys between two parties, enabling secure point-to-point connections, modern telecommunication networks interconnect multiple entities. Extending QKD to the multi-user regime is therefore essential for its large-scale deployment and integration into real-world network infrastructures. 

CV-QKD protocols can be broadly classified by their modulation scheme. Gaussian-modulated (GM) protocols \cite{Cerf_2001, Grosshans_2002, Grangier_2002, Silberhorn_2002, Navascues_2006, Garcia_2006, Diamanti_2015} benefit from powerful symmetry arguments that simplify their security analysis \cite{Leverrier_2013, Leverrier_2017}, but Gaussian modulation itself is an idealization that has never been exactly realized in practice. By contrast, discrete-modulated (DM) protocols \cite{Heid_2006, Zhao_2009, Sych_2010} use finite constellations that are well-suited to experimental implementation, though at the cost of more complex security proofs \cite{Ghorai_2019, Lin_2019, Lin_2020, Upadhyaya_2021, Denys_2021, Matsuura_2021, Kanitschar_2021, Lupo_2022, Kanitschar_2023, Baeuml_2023}.

%Based on the modulation scheme, two distinct families of CV-QKD protocols are known: those employing Gaussian modulation (GM) \cite{Cerf_2001, Grosshans_2002, Grangier_2002, Silberhorn_2002, Navascues_2006, Garcia_2006, Diamanti_2015} and those utilizing discrete modulation (DM). The security analysis of Gaussian-modulated protocols benefits from symmetry arguments \cite{ Leverrier_2013, Leverrier_2017}, significantly simplifying the analytical process. However, it is essential to note that Gaussian modulation remains an idealization that has never been achieved in implementations. In contrast, protocols with discrete modulation \cite{Heid_2006, Zhao_2009, Sych_2010} take finite constellations into account but come with less symmetry, making the security analysis hard and complicated \cite{Ghorai_2019, Lin_2019, Lin_2020, Upadhyaya_2021, Denys_2021, Matsuura_2021, Kanitschar_2021, Lupo_2022, Kanitschar_2023, Baeuml_2023}.

While multi-user QKD has been extensively investigated in the discrete-variable domain \cite{Cabello_2000, Chen_2008, Epping_2017, Grasselli_2018, Murta_2020}, and more recently also with Gaussian-modulated CV-QKD \cite{Bian_2023, Derkach_2024}, the case of discrete-modulated CV-QKD in multi-user networks has remained largely unexplored.

%While several studies have explored multiparty QKD involving discrete variables \cite{Cabello_2000, Chen_2008, Epping_2017, Grasselli_2018} (see also Ref. \cite{Murta_2020} for a comprehensive review), and recently also Gaussian modulated CV-QKD \cite{Bian_2023, Derkach_2024}, the exploration of multi-user CV-QKD with discrete modulation remains relatively uncharted.

Here, we extend a general DM-CV-QKD protocol, previously studied in the standard point-to-point setting \cite{Ghorai_2019, Lin_2019, Lin_2020, Upadhyaya_2021, Denys_2021, Kanitschar_2021, Lupo_2022, Kanitschar_2023, Baeuml_2023}, to the multi-user regime. The scheme relies on a simple “cheap source” architecture that differs from a conventional single-user source only by the addition of beam splitters, making it compatible with cost-efficient, off-the-shelf components. We adapt an analytical security proof technique \cite{Heid_2006} to evaluate different trust models in loss-only channels, and further extend numerical security proof methods \cite{Coles_2016, Winick_2018, Lin_2019, Kanitschar_2023} to analyze general noisy channels. Finally, we demonstrate the practicality of our protocol in a passive optical network with $10$ km access links, achieving a composable secure key rate of $2.185 \times 10^{-3}$ bits per symbol ($0.273$ Mbit/s) against independent and identically distributed collective attacks.

\section{Theory}\label{sec:Theory}
As soon as QKD is performed in a network, multiple users are involved. While in point-to-multipoint networks, a single Alice prepares and distributes coherent states  (Fig.~\ref{fig:Sketch_Trust_Scenarios}(a)), the roles of the $N_B$ Bobs in the network can differ significantly. For the general DM CV-QKD protocol considered can be found in Figure \ref{fig:Sketch_Trust_Scenarios} (b), we may group the different trust assumptions between the communicating parties into four different trust scenarios

\begin{itemize}
    \item[a)] Alice and one particular Bob trust each other but distrust all other $N_B-1$ Bobs. Alice aims to establish secret keys with the trusted Bob. 
    \item[b)] Alice and a group of $1 < M_B < N_B$ Bobs trust each other but distrust all other $N_B-M_B$ Bobs. Alice aims to establish secret keys with one of the trusted Bobs.
    \item[c)] Alice and all $N_B$ Bobs trust each other. Alice aims to establish secret keys with one (trusted) Bob.
    \item[d)] Alice trusts a group of $1\leq M_B \leq N_B$ Bobs. The $M_B$ Bobs are assumed to be legitimate parties, i.e., they do not collaborate with Eve, but also aim for their own private key with Alice, independently of all other Bobs. Alice aims to simultaneously establish secret keys with all trusted Bobs. 
\end{itemize}
Note that a) and c) are special cases of b) for $M_B = 1$ and $M_B = N_B$, respectively. Without loss of generality, we can assume that the first $M_B$ Bobs are trusted while the remaining $N_B-M_B$ are untrusted. The classical protocol phase, which encompasses the last two steps of the protocol (Fig. \ref{fig:Sketch_Trust_Scenarios} (b)),  in trust-scenario d) fundamentally differs from that in cases a) - c), as Alice establishes different secret keys with all mutually trusting Bobs, i.e., during error-correction, each of those Bobs sends syndromes $S_i$ over the classical channel. However, the transmitted syndrome of Bob $i$ can leak information about the raw key of Bob $j\neq i$, which must be considered in the calculation of the key rate. One way to avoid this leakage is to encrypt the syndromes sent via the classical channel \cite{Bian_2023}. 

\begin{figure*}[t]
\centering
\includegraphics[width=2\columnwidth]{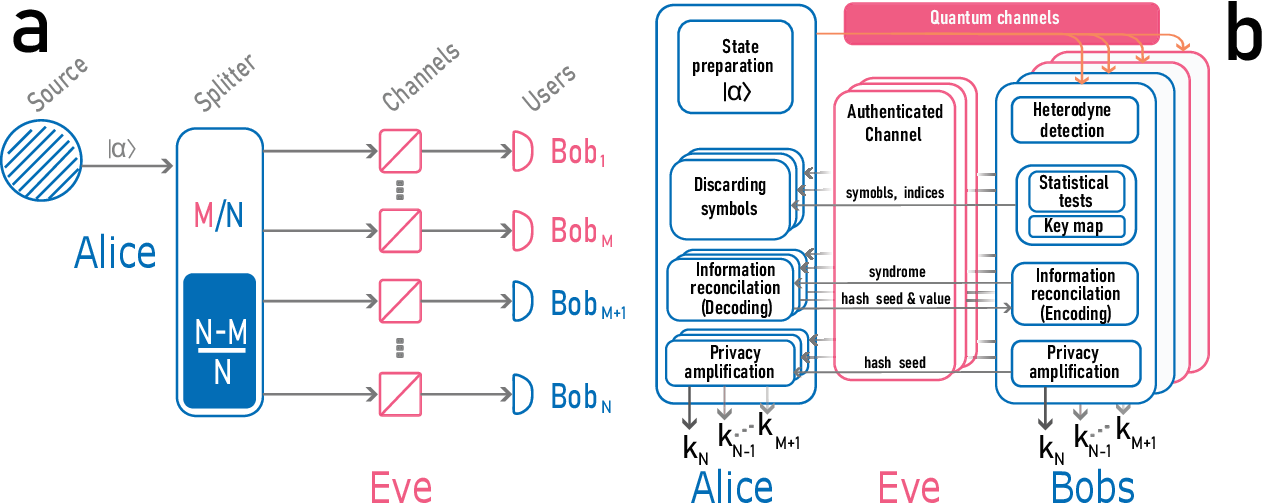}
\caption{(a) Illustration of different trust scenarios, with a) and c) as special cases of b) for $M_B = 1$ and $M_B = N_B$. (b) Protocol description. \label{fig:Sketch_Trust_Scenarios}}
\end{figure*}

To account for different trust assumptions within a quantum network, Alice and the Bobs follow slightly adapted versions of the protocol described in Figure \ref{fig:Sketch_Trust_Scenarios}.
Yet, the secure key rates achievable under these varying scenarios remain largely unexplored. To address this gap, we extend established security proof techniques from the well-studied point-to-point case to the more general multi-user setting. 

Denoting the Bob with whom Alice aims to establish a secret key with $B_j$, the asymptotic key rate for cases a) - c) is given by the well-known Devetak-Winter formula~\cite{Devetak_Winter_2006}
\begin{equation}\label{eq:DevetakWinterSimple}
    R^{\infty} = \beta I(A:B_j) - \chi(B_j:E),
\end{equation}
where $\beta$ is the reconciliation efficiency, $I(A:B_j)$ denotes the mutual information between Alice and Bob $j$, and $\chi(B_j:E)$ is the Holevo information, a quantity bounding Eve's information about the trusted Bob's quantum state. 

In case d), where each of the Bobs aims to generate a separate secret key with Alice, the information leaked by the classical communication of other Bobs about Bob $j$'s key needs to be taken into account. As discussed in Ref.~\cite{Bian_2023}, in this case, the Holevo information needs to be replaced by the maximum of the mutual information held by Eve or any of the other Bobs, leading to the modified key rate bound
\begin{equation}\label{eq:DevetakWinterD}
    R^{\infty} = \beta I(A:B_j) - \max\left\{ \chi(B_j:E), \max_{i\ne j}\{I(B_j:B_i)\}  \right\}.
\end{equation}

The main task remaining is determining the Holevo bounds. 
In what follows, we first present an elegant way of calculating those quantities for loss-only channels, followed by a treatment of the general case. 

\subsection{Lossy, noise-free channel}
For the loss-only case, we generalized the proof method from Ref. \cite{Heid_2006} to the multi-user case. In the absence of noise, Eve's optimal attack is known to be the generalized beamsplitting attack: If Alice prepares a coherent state $\ket{\alpha_x}$ and transmits it over a channel with loss parameter $\eta$, the Bobs receives the state $\ket{\sqrt{\eta}\alpha_x}$, while Eve's state is $\ket{\epsilon_x}:=\ket{\sqrt{1-\eta}\alpha_x}$, where $x \in \left[N_{\mathrm{St}}\right]_{-1}$ labels the state Alice prepared. Our first approach is shown in detail in Appendix \ref{sec:DirectAnaProof}, and expresses Eve's conditional state as a superposition of suitably chosen basis states. Thanks to an orthogonality relation that we proved in Lemma \ref{lem:Lemma1}, this reduces the evaluation of the Holevo quantity to the calculation of the coefficients of Eve's conditional state in this basis. However, the computational complexity of the calculation of these coefficients is $\mathcal{O}\left(N_{\mathrm{St}}^{N_B -1} \right)$, i.e., it scales exponentially in the number of Bobs, which makes the evaluation of the Holevo quantity time-consuming already for medium double-digit values of $N_B$. 

Therefore, in a second step, detailed in Appendix \ref{sec:ReductionSingleBob}, we refined our approach further and showed how unitary transformations which respect the symmetry of the setup and leave the Holevo quantity invariant can be used to transform the multi-Bob scenario into the single-Bob scenario, which allows us to efficiently evaluate the Holevo information and, hence, the secure key rate.

\subsection{Lossy and noisy channel}
While the loss-only scenario offers valuable insight, serves as a benchmark, and allows a quick evaluation of multi-user scenarios, it represents an idealized setting. Thus, we extended our analysis to the general case, encompassing both loss and noise. We analyze the general case for arbitrary discrete modulation within the numerical security proof framework of Refs.~\cite{Coles_2016,Winick_2018}, which have been applied to DM CV QKD in Refs.~\cite{Lin_2019, Upadhyaya_2021, Kanitschar_2023} and generalize these ideas to the multi-user case. In the presence of noise, Eve's optimal eavesdropping strategy is unknown; thus, we require fundamentally different methods. The core idea is to phrase the calculation of the involved entropic quantities as a semi-definite program and pose the experimental observations as constraints. However, as the mathematical description of DM CV-QKD protocols involves infinite-dimensional Hilbert spaces, while we aim to evaluate and solve the semi-definite program on a computer with finite-dimensional arithmetic, we apply the dimension reduction method \cite{Upadhyaya_2021} to find an equivalent finite-dimensional representation. In the asymptotic setting, this leads to the following expression for the secure key rate
\begin{equation}
    R^{\infty} = \min_{\rho \in \mathcal{S}} H\left(Z|E\right)_{\Phi(\rho)} - \delta^{\mathrm{EC}} - \Delta(W),
\end{equation}
where $Z$ is the final key, $\mathcal{S}$ includes the constraints, $\delta^{\mathrm{EC}}$ is the leakage due to error-correction, and $\Delta(W)$ is a correction term due to the reduction to finite dimensions. The protocol description and information about the multi-user scenario enter the map $\Phi$, which, along with other details of the security proof, is detailed in Appendix \ref{sec:NumPrfMethod}. Similarly to the developments in the two-user scenario ~\cite{Lin_2020}, we may take imperfect detection with trusted detectors into account, where the trusted detection efficiency is denoted by $\eta_D$ and the trusted electronic noise by $\nu_{\mathrm{el}}$.

Having characterized the protocol’s asymptotic security, it remains to investigate composable security in the finite-size regime. We quantify the security of a real QKD protocol by its trace-distance to an ideal instance of the same protocol. Obeying this notion of security, as well as considering the effects of finite statistics, requires modifications to both the asymptotic security argument and the protocol. Based on observations, the key-generating Bobs carry out an energy test. The associated energy testing theorem allows us to make a statistical statement about the weight of the involved quantum states outside of the finite-dimensional cutoff space we introduced for computational purposes, which adds an extra step to the asymptotic protocol. Along with an acceptance test, which puts the statistical analysis of the experimental observations on a solid footing, and the multi-user generalizations developed for the asymptotic case, an accordingly generalized version of the security proof in Ref. \cite{Kanitschar_2023}, one can prove composable security of a general DM CV QKD multi-user protocol against collective i.i.d. attacks with key rate
\begin{equation}\label{eq:KR_formula_th}
\begin{aligned}
    \frac{\ell}{N} \leq \frac{n}{N} &\left[ \min_{\rho \in \mathcal{S}^{\mathrm{E\&A}}} H(X|E')_{\Phi(\rho)} - \delta(\bar{\epsilon}) - \Delta(W) \right]     \\
    & - \delta_{\mathrm{leak}}^{\mathrm{EC}} - \frac{2}{N} \log_2\left( \frac{1}{\epsilon_{\mathrm{PA}}} \right),
\end{aligned}
\end{equation}
where $\delta^{\mathrm{EC}}_{\mathrm{leak}}$ takes the classical error correction cost into account, $\Delta(W)$ is given in Eq.~(\ref{eq:WeightCorrectionTerm}), $\delta(\epsilon) := 2 \log_2\left( \mathrm{rank}(\rho_X)+3 \right) \sqrt{\frac{\log_2\left(2/\epsilon \right)}{n}}$ and $\mathcal{S}^{\mathrm{E\&A}}$ is defined below, with security parameter $\epsilon_{\mathrm{EC}} + \max\left\{\frac{1}{2}\epsilon_{\mathrm{PA}}+\bar{\epsilon}, \epsilon_{\mathrm{ET}}+\epsilon_{\mathrm{AT}} \right\}$ against collective i.i.d. attacks. The security parameters $\epsilon_{\mathrm{ET}}, \epsilon_{\mathrm{AT}}, \epsilon_{\mathrm{EC}}$ and $\epsilon_{\mathrm{PA}}$ refer to the energy test, the acceptance test and the classical subroutines handling error-correction and privacy amplification, whereas $\overline{\epsilon}$ is a virtual parameter related to smoothing, that can be chosen freely. Finally, the set $\mathcal{S}^{\mathrm{E\&A}}$ is a result of the testing procedure and given by
\begin{align*}
    \mathcal{S}^{\mathrm{E\&A}} := &\left\{ \sigma \in \mathcal{S}^{\mathrm{ET}}: \right.\\
    &\left.~ \mathrm{Tr}_{E}\left[\sigma\right] \text{ is not $\epsilon_{\mathrm{AT}}$-securely filtered in the AT} \right\},
\end{align*}
where 
\begin{align*}
    \mathcal{S}^{\mathrm{ET}} := &\left\{\sigma \in \mathcal{D}_{\leq}(\mathcal{H}_A \otimes \mathcal{H}_B^{n_c} \otimes \mathcal{H}_E): \textrm{ purification of } \rho_{AB}   \right. \\
    & \left. \land \mathrm{Tr}_{E}\left[\sigma\right] \text{ is not $\epsilon_{\mathrm{ET}}$-securely filtered in the ET} \right\}.
\end{align*}

The remaining optimization can be solved using the two-step method introduced in Refs. \cite{Coles_2016, Winick_2018}, that rewrites the problem into a linearized semi-definite program which, in a first step, is solved iteratively. This is followed by a second step, where the obtained iterative bound is turned into a reliable lower bound using SDP duality theory. The coding was carried out in \textsc{Matlab}\textsuperscript{\textregistered} version R2022a and convex optimization problems were modeled using CVX \cite{cvx1,cvx2}, while we used MOSEK (version 10.0.34) \cite{mosek} for the numerical solution of the semidefinite programs.

%--------------------------------------------------------
%                Network implementation
%--------------------------------------------------------
\section{Network implementation}\label{sec:NetworkImplementation}

We implemented a DM-CVQKD system supporting a multi-user network topology with two receivers (Bobs), as illustrated in Fig.~\ref{fig:TxPic}. This setup extends our single-user DM-CVQKD system to a small-scale quantum access network using a passive optical splitter. Our system incorporates digital signal processing (DSP) for both modulation and signal recovery and a simplified optical subsystem.

\begin{figure*}[t]
   \centering
        \includegraphics[width=0.8\linewidth]{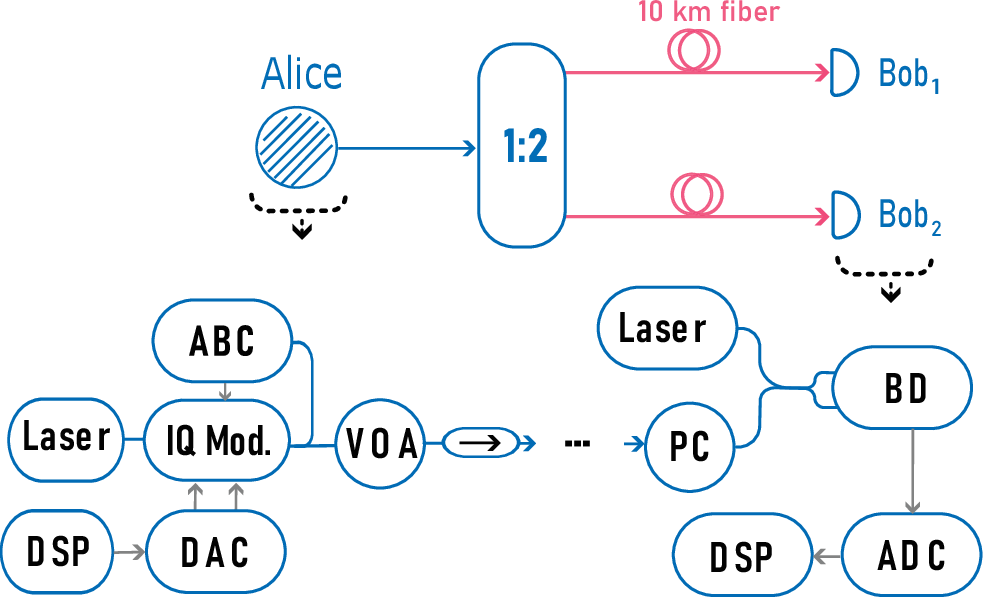}
    \caption{ Schematic of the experimental setup for multi-user DM-CVQKD. Refer to the main text for a detailed description.}
    \label{fig:TxPic}
\end{figure*}

\subsection{Transmitter}

 Alice prepares one of four coherent states using a combination of DSP and optical modules. The real ($q_i$) and imaginary ($p_i$) components of the complex amplitude states were drawn independently from uniform distributions, specifically $q_i, p_i \in \frac{1}{\sqrt{2}} \times \text{Uniform}\{-1,1\}$, at 125~MBaud, using a pseudo-random number generator. These symbols were then upsampled to match the digital-to-analog converter (DAC) sampling rate of 1~GSample/s and pulse-shaped using a root-raised cosine (RRC) filter with a roll-off factor of 0.2. The resulting signal was then frequency-shifted to be centered around  200~MHz and frequency-multiplexed with a 25~MHz pilot tone for carrier phase tracking.

The digital waveform was uploaded to the DAC, which drove an in-phase and quadrature (IQ) modulator to modulate a 1550~nm continuous-wave (CW) laser with 100~Hz linewidth. The bias voltages of the IQ modulator were actively stabilized by an automatic bias controller (ABC). An optical variable attenuator (VOA) was employed to fine-tune the amplitude of the coherent states, and an isolator was inserted to mitigate Trojan horse attacks. Last, the quantum signal was distributed to all users via a quantum broadcasting channel composed of a 1:2 passive optical splitter and two 10 km single-mode fiber (SMF) spools.

\subsection{Receivers}

Each of the two Bobs performed coherent detection using RF heterodyne receivers. The incoming quantum signal was mixed with a local oscillator (LO) at a balanced beamsplitter. Each user employed an independent LO laser with a relative frequency offset of approximately 292~MHz and 300~MHz from Alice’s laser for Bob\(_1\) and Bob\(_2\), respectively. Polarization alignment was achieved using manual polarization controllers to maximize interference visibility at the detectors. Each user utilized a balanced detector (BD) with approximately 250~MHz bandwidth, followed by a 1~GSample/s analog-to-digital converter (ADC) to digitize the detected signal. The ADCs were synchronized to the transmitter’s DAC using a shared 10~MHz reference clock.

The DSP pipeline began with a whitening filter to equalize the detector’s spectral response, followed by carrier phase recovery using a pilot-aided machine learning framework. Temporal synchronization was achieved via cross-correlation with known reference sequences. The quantum symbols were then recovered through matched filtering and downsampling to 125~MBaud.

To calibrate the amplitude of the transmitted states, a back-to-back measurement was performed using a short fiber connection between Alice and one of the Bobs. Based on theoretical optimization, the coherent state amplitude was set to approximately 0.69 to maximize the secure key rate under a trusted receiver efficiency of 68\%. After amplitude calibration, each Bob performed three sequential measurements: transmission of the quantum signal over the broadcasting channel, vacuum noise, and electronic noise.

\begin{figure}
\subfloat[\label{fig:AnaAllUntrusted} Scenario a), i.e., all other Bobs untrusted]{
\includegraphics[width=0.45\textwidth]{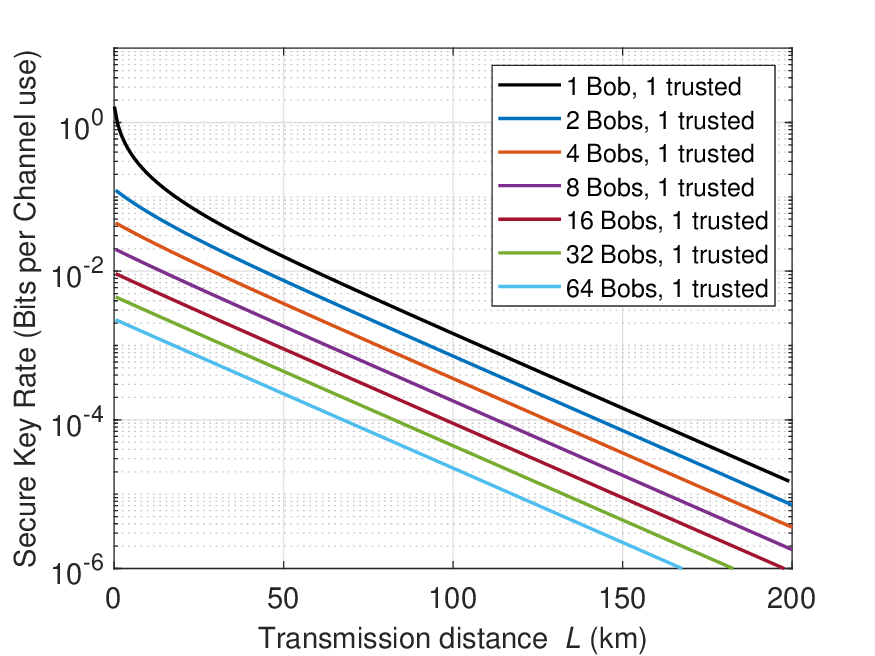}}\\
\subfloat[\label{fig:AnaAllTrusted} Scenario c), i.e., all Bobs trusted]{
\includegraphics[width=0.45\textwidth]{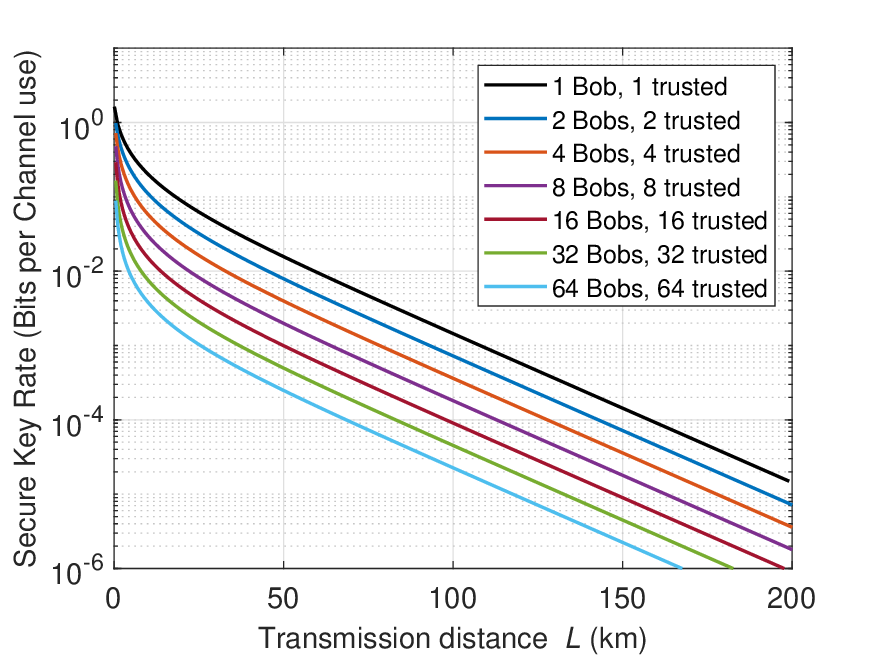}}
\makeatletter\long\def\@ifdim#1#2#3{#2}\makeatother
\caption{QPSK key rate for a single Bob vs transmission distance for the loss-only channel. \label{fig:ResultsKRAna} }
\end{figure}

%--------------------------------------------------------
%                   Postprocessing
%--------------------------------------------------------
\section{Post-Processing}\label{sec:Postprocessing}
In order to transform the DSP-processed measurement data into a secure key, a series of classical post-processing steps needs to be performed. To this end, the raw input material is packaged into key blocks containing a fixed number of Quadrature Phase Shift Keying (QPSK) \cite{Lin_2019, Kanitschar_2021} symbols by each peer, Alice and Bob(s), along with a block of metadata describing various characteristics of the key. These keys are subjected to the individual processing steps sequentially in a self-contained software pipeline consisting of several modules, where metadata from previous modules is consumed and additional metadata is provided for subsequent modules. These modules shall be described in detail in the following.

\subsection{Message Authentication}\label{subsec:Authentication}
Most of the modules communicate control messages or process data on a classical channel, such as Ethernet, which can safely be assumed to be error-free, but otherwise completely public and subject to man-in-the-middle attacks. To deal with this situation, the communication needs to be authenticated in an information-theoretically secure manner. To this end, each key metadata is equipped with a data structure prior to post-processing, a so-called crypto-context, containing a fixed-sized hash, both for incoming and outgoing messages. In the following modules, all messages transmitted over the classical channel are hashed onto the crypto-context using a randomly selected universal hash function. After the last processing step, each crypto-context is finalized by hashing a pre-shared secret, and the resulting hashes are openly compared. Any discrepancy reveals tampering of an adversary with the classical communication, resulting in abortion of the key exchange. From each of the successfully processed keys, a small portion is split off and used to replenish the pre-shared secret for subsequent key authentication.

\subsection{Energy Test and Acceptance Test}
The statistical testing procedure can be subdivided into an energy test and an acceptance test. Therefore, Alice selects $k_T$ random rounds and announces their respective labels. Each of the Bobs discloses their measurement results for those rounds over the classical channel. 

The goal of the energy test is to obtain a high probability bound for the maximum dimension of the involved quantum states, which is necessary for the numerical part of the security analysis. For the energy test, the communicating parties compare the amplitude of the announced measurements with a pre-chosen threshold $\beta_{\mathrm{ET}}$ and count the number of rounds that exceed the threshold. For each Bob, the test passes if $\left|\left\{ j \in\{1,...,k_T\}:~ \abs{\frac{q_j + i p_j}{\sqrt{2}}} > \beta_{\text{test}} \right\}\right| \leq \ell_T$, where $q_j$ and $p_j$ are the outcomes of the heterodyne measurement in round $j$.

Next, for the acceptance test, before protocol execution, the communicating parties need to fix an expected channel behaviour, which serves to fix the set of accepted statistics of the protocol. For the present security proof method, for each Bob, they need to fix $\langle \hat{n}_{\beta_i}\rangle$ and  $\langle\hat{n}^2_{\beta_i}\rangle$ for $i\in \{0, 1,2,3 \}$ and compare their observed means, derived based on the disclosed $k_T$ rounds, with the pre-fixed values. The used acceptance test from Ref. \cite{Kanitschar_2023} also allows for non-unique acceptance, where deviations from the expected behaviour can be allowed and included in the security statement, parametrised by a parameter $t$.

For details about the test statements, we refer to Ref. \cite{Kanitschar_2023}, while we refer to the Supplementary Material of Ref. \cite{Hajomer_Kanitschar_2024} for detailed explanations regarding the correct data handling and analysis.

\subsection{Postselection \& Key Mapping}\label{subsec:Postselection}
In a low-SNR scenario, i.e., for long-distance QKD or many Bob instances, the secure key rate can be maximized by discarding certain low-amplitude symbols, thus resulting in an improved SNR. In our implementation, we excluded a circular region around the origin of the quadrature space. The radius of this region is a tuning parameter and is subjected to an optimization procedure. Moreover, the maximum accepted amplitude given by the virtual detection range $M$ is enforced by discarding any symbols exceeding this limit. Both values need to be taken into account in the estimation of the entropy term during parameter estimation (see Eq. \ref{eq:KR_formula_th}).

The quadrature amplitudes of the QPSK symbols are then mapped onto a 2-bit binary string, according to the sign of the amplitudes.

\subsection{Error Correction \& Confirmation}\label{subsec:ErrorCorrection}
Error correction is performed by means of low-density parity-check (LDPC) codes and belief-propagation using the sum-product algorithm. Since the measurement data have been mapped to binary symbols in the previous step, we operate the LDPC decoder in the binary-symmetric channel (BSC) model. For a channel loss of more than 3dB, we operate the decoder in reverse reconciliation, meaning that Bob's key is arbitrarily defined as correct and Alice performs the decoding. We use a set of 22 LDPC codes with native code rates $R_0$ between 11\% and 0.5\%. 
\begin{equation}\label{eq:NativeCodeRate}
    R_0=\frac{n-m}{n},
\end{equation}
with the block length $n$ and the syndrome length $m$.
The code rate can be tuned continuously by selecting the appropriate code and modulating a fixed fraction of symbols with puncturing and shortening \cite{Martinez_2013}, thus obtaining a pre-selected frame error rate according to the estimated QBER. Punctured bits are replaced by random values while shortened bits are set to a fixed value, for example, zero. The effective code rate $R$ thus reads
\begin{equation}\label{eq:EffCodeRate}
    R=\frac{n-m-s}{n-p-s},
\end{equation}
with the number of punctured bits $p$ and shortened bits $s$.
This ensures a smooth efficiency distribution
\begin{equation}\label{eq:LdpcEfficiency}
    \beta=\frac{R}{C(\varepsilon)},
\end{equation}
regardless of the given QBER value $\varepsilon$.
All codes use a fixed block length of $n=512000$ bits and a constant rate modulation of $p+s=8192$ bits. To ensure efficient use of the large input quantum key, it is split up into sub-blocks of size 503808 bits, each of which is extended with the modulated bits in random locations and attempted to be corrected in a maximum of 200 decoder iterations. For successfully corrected sub-blocks, an amount of 
\begin{equation}\label{eq:EcLeakage}
    EC_{leak}=n \times (1-R)
\end{equation}
bits are marked as disclosed, while for failed sub-blocks, we attribute an amount equal to our bound on Eve's knowledge about the key,
\begin{equation}\label{eq:EcLeakage}
    EC_{leak}=n \times \left[ \min_{\rho \in \mathcal{S}^{\mathrm{E\&A}}} H(X|E')_{\Phi(\rho)} - \delta(\bar{\epsilon}) - \Delta(W) \right]
\end{equation}
 disclosed bits. As per the requirement of the underlying security model, no sub-blocks may be discarded, so failed sub-blocks are publicly announced and replaced.

To achieve the requested correctness, an additional verification step is performed. To this end, a randomly selected universal hash is distilled from the key and openly compared between the peers. Again, failed sub-blocks are treated the same way as in the error correction step.

\subsection{Privacy Amplification}\label{subsec:PrivacyAmplification}
Any quantum information Eve may have gained during the physical key exchange, as well as all the classical information leaked during the classical post-processing, needs to be rendered useless in the process of privacy amplification. The final key length is given by Eq. (\ref{eq:KR_formula_exp}), which is achieved with universal hashing using a randomly selected Toeplitz matrix.

%--------------------------------------------------------
%                       Results
%--------------------------------------------------------
\section{Results}
Having established the theoretical security argument and described the experimental setup, we now present our key results. While we reported the key rate formula in Eq. (\ref{eq:KR_formula_th}), for the experimental implementation, it is crucial to clarify how the secure key rates were obtained in practice. Therefore, we reformulate the right-hand side of Eq. (\ref{eq:KR_formula_th}) as follows, taking the actual error-correction leakage, the leaked bits due to privacy amplification, error-verification, and channel authentication into account,

\begin{equation} \label{eq:KR_formula_exp}
\begin{aligned}
    \frac{\ell}{N} \leq \frac{n}{N} &\left[ \min_{\rho \in \mathcal{S}^{\mathrm{E\&A}}} H(X|E')_{\Phi(\rho)} - \delta(\bar{\epsilon}) - \Delta(W) - \mathrm{EC}_{\mathrm{leak}}\right]     \\
    & -\frac{1}{N}\left[ n_{\mathrm{blocks}} b_{\mathrm{EV}} + b_{\mathrm{PA}} + 4 b_{\mathrm{auth}}  \right].
\end{aligned}
\end{equation}
We note that the entropy term is obtained via convex optimisation based on the method explained in Appendix \ref{sec:NumPrfMethod}, while the terms $\Delta(W)$ and $\delta(\bar{\epsilon})$ are corrections to the obtained entropy. The error-correction leakage $\mathrm{EC}_{\mathrm{leak}}$ is obtained directly from the error-correction routine (see Subsection \ref{subsec:ErrorCorrection}) and the leaked hash-lengths due to other classical subroutines are reported in Table~\ref{tab:expParams}. Adding the epsilons of the authentication procedure as well as the epsilon of the random number generator to the security parameter obtained from theory, we obtain a total security epsilon of $\epsilon = \epsilon_{\mathrm{RNG}} + 2 \epsilon_{\mathrm{auth}} + \epsilon_{\mathrm{EC}} + \max\left\{\frac{1}{2}\epsilon_{\mathrm{PA}} + \overline{\epsilon}, \epsilon_{\mathrm{ET}} + \epsilon_{\mathrm{AT}} \right\}$. Which leaves us to comment on the relation of those parameters to the employed subroutines. While the security parameter of the random number generator is directly available from the data sheet, the security parameter of the message authentication routine is a function of the length of the authenticated transcript and the chosen length of the encryption key $b_{\mathrm{auth}}$ and can be bounded as follows \cite{Krzic_2023} 
\begin{equation}
    \epsilon_{\mathrm{auth}} \leq \frac{|C_{\mathrm{transcript}}|}{b_{\mathrm{auth}}} 2^{-b_{\mathrm{auth}}}.
\end{equation}

The failure probability of the error-correction routine can be bounded by a function of the error-verification hash length and the length of the bitstring that was corrected. The bound is given by \cite{Johansson94polyhashing} 
\begin{align}
    \epsilon_{\text{EC}} = 2^{-b_{\mathrm{EV}}} \times \left\lceil \frac{L_\mathrm{LDPC}}{b_{\text{EV}}} \right\rceil \times \left\lceil \frac{n(1-r_{\perp})}{L_\mathrm{LDPC}}\right\rceil.
\end{align}

The parameters $\epsilon_{\mathrm{AT}}$ and $\epsilon_{\mathrm{ET}}$ correspond to the acceptance test and the energy test and bound the failure probabilities of the corresponding tests. For their relation to protocol and implementation parameters, we refer the reader to Theorems 3 and 4 in Ref. \cite{Kanitschar_2023}. The smoothing parameter $\overline{\epsilon}$ is a virtual protocol parameter that can be chosen freely (but influences the size of the correction term $\delta_{\overline{\epsilon}}$). The parameter $\epsilon_{\mathrm{PA}}$ decreases exponentially as a function of the difference between the entropic quantity $H_{\mathrm{min}}^{\overline{\epsilon}}(X|E)_{\rho}$ in the security argument and the length of the final key. We denote this difference as $b_{\mathrm{PA}}$, hence obtain
\begin{equation}
    \epsilon_{\text{PA}} \leq 2^{-\frac{b_{\text{PA}}}{2}}.
\end{equation}

We note that all subprotocol epsilons decrease exponentially and therefore can (in principle) be made arbitrarily small. For the practical implementation, we aimed for a total security parameter of $\epsilon < 10^{-10}$, which led to the choices reported in Table \ref{tab:secParams} and the associated hash lengths reported in Table \ref{tab:expParams}.

One of the main advantages of using discrete modulation rather than Gaussian modulation is a significantly reduced demand on the random number generator, as for all steps we only need a discrete and uniform probability distribution. In the state generation phase, Alice requires $2N$ bits of entropy to choose the modulation for $N$ QPSK symbols. After state transmission, Alice chooses randomly which $k_T$ of the $N$ rounds shall be disclosed publicly by the communicating parties. Therefore, she needs $\lceil \log_2\binom{N}{k_T}\rceil\le\lceil Nh_2(k_T/N)\rceil$ bits of entropy, where we already provide an easy-to-calculate upper-bound as a function of the binary entropy $h_2$. Finally, for the classical postprocessing, Alice chooses a hash function for every error-correction block to verify correctness, which uses $b_{\mathrm{EV}}$ bits of entropy, along with choosing a random hash function for privacy amplification, which consumes $n-1$ bits of entropy.
In total, we upper-bound the number of required random bits as a function of the total number of signals transmitted by
\begin{equation}
    N_{\mathrm{random}} \leq 2N + \lceil N h_2\left( r_T \right)\rceil + b_{\mathrm{EV}}n_{\mathrm{blocks}} + \lceil(1-r_T)N\rceil, 
\end{equation}
where we inserted $r_T = \frac{k_T}{N}$.\\

To illustrate theory curves, we assume standard optical fibers with a loss of $0.2~\mathrm{dB/km}$ connecting Alice to each of the Bobs, and we simulate the Bobs’ observations under the assumption of a Gaussian channel. We emphasize, however, that this assumption is made solely for illustration and is not required by our security analysis. The theoretical arguments apply to an arbitrary number of Bobs, with any subset being potentially trusted (see scenarios a)–d) in Section~\ref{sec:Theory}). For simplicity of presentation, we further assume a symmetrical network topology, where all Bobs are equidistant from Alice. While our security framework is general, our simulations and experimental implementation focus on a quadrature phase-shift keying (QPSK) protocol, in which four coherent states with identical amplitude $|\alpha|$ are prepared and transmitted, differing only in phase by increments of $\frac{\pi}{2}$.

A central question in QKD networks is how many users can be supported at a given transmission distance. To quantify this limit, we use our analytical proof method to evaluate the achievable secure key rate as a function of distance for varying numbers of Bobs. Figure~\ref{fig:ResultsKRAna} shows the asymptotic key rate for $2^n$ Bobs, with $n \in {0,1,2,3,4,5,6}$. In Figure~\ref{fig:AnaAllTrusted}, we consider the optimistic scenario where all Bobs are mutually trusted, while Figure~\ref{fig:AnaAllUntrusted} depicts the opposite case in which a given Bob distrusts all other users. For each fixed distance, we optimized the secure key rate over the coherent state amplitude $\alpha$ in the interval $[0.3,5]$ using the built-in \textsc{Matlab}\textsuperscript{\textregistered} routine \textit{fminbnd} applied to the negative objective function. 

In both scenarios, we observe qualitatively similar behavior at long transmission distances; however, mutual trust among Bobs significantly improves the key rate in the low to medium distance regime. This observation aligns with the structural reduction used to simplify our security analysis (see Section~\ref{sec:ReductionSingleBob}). Although the signal received by the key-generating Bob remains the same for a fixed number of users, the portion intercepted by Eve differs. This can be quantified by the function $r(\eta):=\sqrt{\frac{1-\eta}{1-\frac{1}{N_B}\eta}}$, which compares Eve’s information in both trust scenarios. For example, at $\eta_1 = 0.954$ (corresponding to $1$~km) and $\eta_{100} = 0.01$ (corresponding to $100$~km), and with $N_B = 16$, we find $r(\eta_1) = 0.219$ and $r(\eta_{100}) = 0.995$. Thus, at $100$~km, Eve’s advantage is nearly identical in both cases, while at $1$~km, it differs by a factor of approximately five, explaining the observed gap in key rate.

Since mutual trust among users proved beneficial in the low- to medium-distance regime, we fixed the transmission distance between Alice and each Bob to $10$~km—a realistic scenario for urban or campus-scale networks—and analyzed the secure key rate as a function of the number of users in the presence of both loss and noise, considering both the asymptotic and finite-size regimes.

In Figure~\ref{fig:KRvsNoOfBobs}, the curves for the idealized lossy channel and the general lossy-and-noisy case show qualitatively similar behavior. Although noise slightly reduces the key rates in the general case, both scenarios support networks with up to $32$ Bobs while still achieving secure key rates of at least $10^{-3}$ bits per symbol.

In contrast, the finite-size regime reveals a significantly higher sensitivity to the number of users. At a $10$km transmission distance, the network supports up to $5$ Bobs with non-zero key rates. In the low-noise regime, the finite-size key rate is comparable to the asymptotic case, reaching $6.6 \times 10^{-3}$ bits per symbol for networks with $5$ trusted Bobs. The corresponding total security parameter was chosen to be $\epsilon = 10^{-10}$. However, adding a sixth Bob results in a key rate drop to zero.

We attribute this sharp decline not to a fundamental limitation, but likely to numerical artifacts introduced by the two-step solver used in our analysis. Specifically, in this case, we notice a non-negligible gap between steps 1 and 2 which leads to a premature drop to zero. Nevertheless, the steepness of the decline suggests that, even in an ideal solver scenario, the network is unlikely to support more than 10 Bobs in the finite-size regime.

\begin{figure}
\centering
\includegraphics[width=0.95\columnwidth]{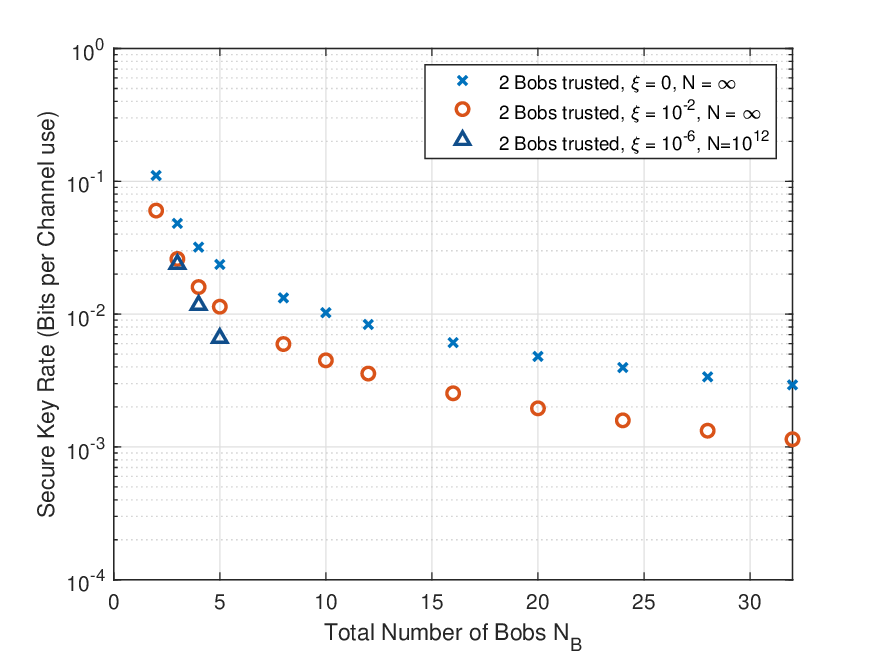}
\caption{Achievable secure key rate vs. total number of Bobs in three different scenarios: loss-only asymptotic, loss and noise asymptotic, loss and noise finite-size, while two Bobs trust each other. The distance between the central node and each Bob was fixed to $10$km and the curves were optimized over the coherent state amplitude $|\alpha|$.\label{fig:KRvsNoOfBobs}}
\end{figure}
Next, we evaluated the network performance in two representative scenarios: one with two mutually trusting Bobs, and another with five Bobs, among whom only two share mutual trust. Figure~\ref{fig:KRvsDistance} presents the secure key rate as a function of transmission distance under an excess noise of $\xi = 10^{-2}$. The solid curves depict the asymptotic key rates for networks with 2 Bobs (blue) and 5 Bobs (orange), while the dashed curves represent the corresponding finite-size results.

\begin{figure}
\centering
\includegraphics[width=0.95\columnwidth]{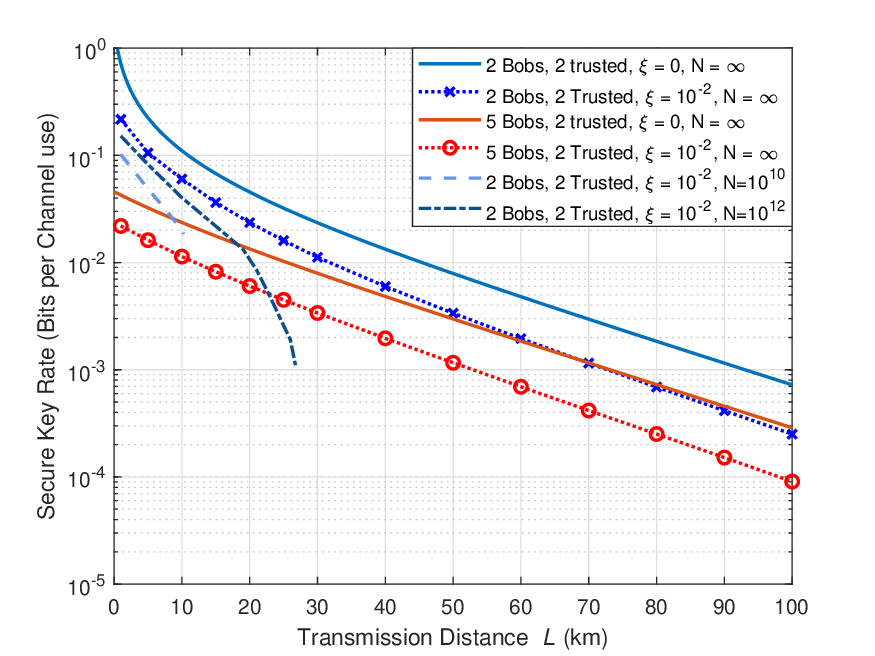}
\caption{Achievable secure key rate vs. transmission distance in the asymptotic (solid curves) and the finite-size regime (dashed curves). Each data point was optimized over the coherent state amplitude $|\alpha|$ as well as over the postselection parameter $\Delta_r$.\label{fig:KRvsDistance}}
\end{figure}

For a block size of $N = 10^{10}$, the finite-size key rate drops to zero slightly beyond 10km. However, this sharp cutoff is not a fundamental limitation but rather a consequence of the numerical issue discussed previously, which can lead to overly pessimistic rate estimates. Increasing the block size to $N = 10^{12}$ mitigates this effect, yielding non-zero key rates up to approximately $26.5$km.

Finally, we implemented our protocol in a passive optical network, where Alice (the central node) was connected to two receivers (the Bobs) via two $10$km optical fibers, as described in Section \ref{sec:NetworkImplementation}. Both Bobs are equipped with imperfect heterodyne detectors. Hardware and Protocol parameters are summarised in Table \ref{tab:generalParams}. We note that the excess noise value is reported only to allow for comparison with other implementations; our work does not assume a Gaussian channel and hence does not rely on the excess noise parameter, but instead uses directly the measured quantities. The postprocessing pipeline was adapted, implemented, and tested for the multi-user scenario, with crucial improvements in the error-correction routines, including the implementation of shortening and puncturing (see Section \ref{subsec:ErrorCorrection} for details). 

We characterized the system and optimized the expected secure key rate over the coherent state amplitude $\alpha$, finding the optimum $\alpha_{\mathrm{opt}} = 0.69$. The protocol was executed for $N=2.3\times 10^9$ rounds in the honest implementation, with further optimisation over the postselection parameter $\Delta_r$. Using $40\%$ of the signals for testing, an energy-test parameter of $\beta_{\mathrm{ET}} = 5.25$, and an optimal postselection parameter of $\Delta_r = 0.575$, we achieved a secure key rate of $2.185 \times 10^{-3}$ bits per symbol, upon acceptance in subsequent protocol executions. The corresponding total security parameter was $\epsilon = 10^{-10}$. This result is marked by a yellow star in Figure \ref{fig:KRvsloss}. At a system repetition rate of $125$MHz, this corresponds to a key rate of  $\approx0.273$Mbit/s.
The observed gap between the theoretical curve and the experimental data in Figure \ref{fig:KRvsDistance} arises primarily due to two factors. First, the experimentally recorded number of signal rounds, $N = 2.3 \times 10^9$, is lower than the theoretically assumed value of $N = 10^{10}$, which naturally reduces the key rate. Second, the theoretical curves were simulated for a lower testing ratio of $r_T = 10\%$, whereas the experiment used $r_T=40\%$, which ultimately yielded better performance in practice.

\begin{figure}
\centering
\includegraphics[width=0.95\columnwidth]{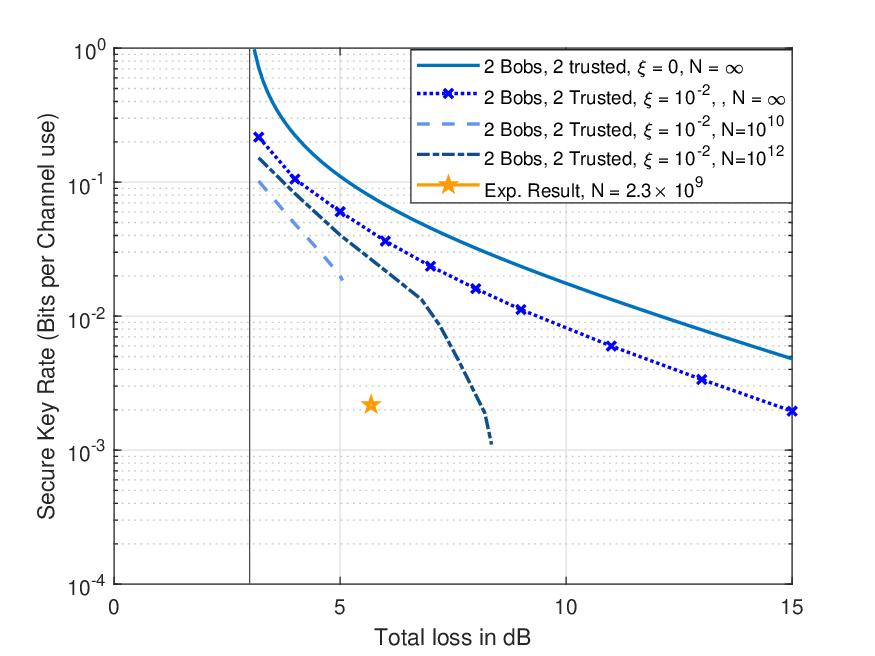}
\caption{Achievable secure key rate vs. total loss for two Bobs. We show the asymptotic noiseless curve (solid blue), the asymptotic curve for $\xi=10^{-2}$ (dotted blue), as well as finite-size curves for $\xi=10^{-2}$ for block-sizes of $N=10^{10}$ (dashed light-blue) and $N=10^{12}$ (dot-dashed dark-blue). Each data point was optimized over the coherent state amplitude $|\alpha|$ as well as over the postselection parameter $\Delta_r$. Additionally, we represent our experimental result, achieved for $N=2.3\times 10^{9}$ with a yellow star.\label{fig:KRvsloss}}
\end{figure}

\begin{table}[t]
\centering
\caption{\textbf{Security parameters of involved (sub-) protocols}.}
\resizebox{0.78\hsize}{!}{
\begin{tabular}{|ccc|}
\hline
\multicolumn{1}{|c|}{\textbf{(Sub-) Routine}}                 & \multicolumn{1}{c|}{\textbf{Symbol}}     & \textbf{Value}                 \\ \hline
\multicolumn{1}{|c|}{\textit{\textbf{QKD Protocol}}}          & \multicolumn{1}{c|}{$\epsilon$}             & $\leq 1 \times 10^{-10}$            \\ \hline
\multicolumn{3}{|c|}{\textit{\textbf{}}}                                                                                                     \\ \hline
\multicolumn{1}{|c|}{\textit{\textbf{Privacy Amplification}}} & \multicolumn{1}{c|}{$\epsilon_{\text{PA}}$} & $4 \times 10^{-15}$ \\ \hline
\multicolumn{1}{|c|}{\textit{\textbf{Error Correction}}}    & \multicolumn{1}{c|}{$\epsilon_{\text{EC}}$} & $ \leq\frac{1}{40} \times 10^{-10}$ \\ \hline
\multicolumn{1}{|c|}{\textit{\textbf{Energy Test}}}           & \multicolumn{1}{c|}{$\epsilon_{\text{ET}}$} & $\frac{1}{10} \times 10^{-10}$ \\ \hline
\multicolumn{1}{|c|}{\textit{\textbf{Acceptance Test}}}       & \multicolumn{1}{c|}{$\epsilon_{\text{AT}}$} & $\frac{8}{10} \times 10^{-10}$ \\ \hline
\multicolumn{1}{|c|}{\textit{\textbf{ Smoothing}}}      & \multicolumn{1}{c|}{$\bar{\epsilon}$}     & $\frac{8}{10} \times 10^{-10}$ \\ \hline
\multicolumn{1}{|c|}{\textit{\textbf{ Random Number Generation}}}      & \multicolumn{1}{c|}{$\epsilon_{\text{RNG}}$}     & $\frac{1}{20} \times 10^{-10}$ \\ \hline
\multicolumn{1}{|c|}{\textit{\textbf{ Authentication}}}      & \multicolumn{1}{c|}{$\epsilon_{\text{auth}}$}     & $<\frac{1}{80} \times 10^{-10}$ \\ \hline
\end{tabular}
}
\label{tab:secParams}
\end{table}

\begin{table}[ht]
\centering
\caption{\textbf{Protocol parameter choices  \& experimental parameters. \label{tab:expParams}}}
\resizebox{0.98\hsize}{!}{
\begin{tabular}{|c|c|c|}
\hline
\textbf{Parameter}                    & \textbf{Symbol}     & \textbf{Value}                                     \\ \hline
%\textit{\textbf{Verification hash length}} & $b_{\mathrm{EV}}$            & $96$ (bit)                                            \\ \hline
%$\left\lceil 2\log_2\left( \frac{1}{\epsilon_{\mathrm{PA}}} \right) \right\rceil$ & $b_{\text{PA}}$            & $96$ (bit)                                        \\ \hline
\textit{\textbf{Coherent state amplitude}} & $|\alpha|$            & $0.69$                                            \\ \hline
\textit{\textbf{Postselection parameter}} & $\Delta_r$            & $0.575$                                            \\ \hline
\textit{\textbf{Cutoff number}}            & $n_c$               & $7$                                               \\ \hline
\textit{\textbf{Detection limit}}          & $M$                 & $5.25$ (NU)                                             \\ \hline
\textit{\textbf{ET - parameter}}           & $\beta_{\mathrm{ET}}$ & $5.5$                                              \\ \hline
\textit{\textbf{Testing ratio}}           & $r_T$ & $\{40\%\}$                                              \\ \hline
\textit{\textbf{Fraction of outliers}}     & $\frac{l_T}{k_T}$   & $10^{-8}$                                          \\ \hline
\textit{\textbf{Weight}}                   & $w$                 & $\left[1 \times 10^{-7}, 3 \times 10^{-7} \right]$ \\ \hline
\textit{\textbf{$t$-factor}}                   & $t_F$                 & $1$ \\ \hline
\textit{\textbf{Detection efficiency Bob 1}}                   & $\eta_{D,\mathrm{B1} }$                 & $0.68$ \\ \hline
\textit{\textbf{Detection efficiency Bob 2}}                   & $\eta_{D,\mathrm{B2} }$                 & $0.68$ \\ \hline
\textit{\textbf{Electronic noise Bob 1}}                   & $\nu_{\mathrm{el, B1}}$                 & $0.0277$ \\ \hline
\textit{\textbf{Electronic noise Bob 2}}                   & $\nu_{\mathrm{el, B2}}$                 & $0.0291$ \\ \hline
\textit{\textbf{Est. excess noise}}                   & $\xi$                 & $2.7\times 10^{-2}$ \\ \hline
\textit{\textbf{Est. channel transmittance A-B1}}                   & $\eta_{\mathrm{Ch,AB1}}$                 & $0.2247$ \\ \hline
\textit{\textbf{Est. channel transmittance A-B2}}                   & $\eta_{\mathrm{Ch, AB2}}$                 & $0.2665$ \\ \hline
\textit{\textbf{Number of EC blocks}}                   & $n_{\mathrm{blocks}}$                 &4094 \\ \hline
\textit{\textbf{Verification hash length}}                   & $b_{\mathrm{EV}}$                 & $96$ (bit) \\ \hline

\textit{\textbf{Authentication hash length}}                   & $b_{\mathrm{auth}}$                 & $96$ (bit) \\ \hline
\textit{\textbf{Privacy amplification hash length}}                   & $b_{\mathrm{EV}}$                 & $96$ (bit)\\ \hline
\end{tabular}
}
\label{tab:generalParams}
\end{table}

%--------------------------------------------------------
%                       Discussion
%--------------------------------------------------------
\section{Discussion}\label{sec:Discussion}
In this study, we provide the first security analysis of continuous-variable quantum key distribution with discrete modulation in the multi-user setting, where a single Alice is connected to multiple Bobs, establishing a key with some or even all of them simultaneously. Our analysis showed that the proposed scheme could accommodate a double-digit number of users at urban-area distances (approximately $10$km) in an asymptotic setting and a mid-single-digit number of users in the finite-size setting. Beyond those theory contributions both in the asymptotic and finite-size regime, we implemented a passive optical network consisting of two Bobs connected to Alice and demonstrated secure key rates of $\approx0.886$Mbit/s 

The generated keys can be directly employed to encrypt messages between Alice and multiple Bobs. Moreover, the individually shared keys between Alice and each Bob can be used to distribute random bit strings for conference key agreement. Our work demonstrates the practicality of discrete-modulation continuous-variable QKD (DM-CVQKD) for deployment in quantum optical networks, enabling secure communication among multiple users over short to medium distances, such as those typical in urban or campus environments. Importantly, unlike many theoretical studies that rely on idealised assumptions about detectors, noise, or hardware, our analysis includes realistic parameters. Combined with our practical demonstration of key exchange over a passive optical network - representative of widely deployed telecommunication infrastructure - this work marks a significant step toward secure communication in complex, multi-user network settings. 

In contrast to standard Gaussian-modulated protocols, which were generalized to the multi-user regime recently \cite{Bian_2023, Derkach_2024, Pan_2025}, we used a finite discrete modulation of coherent states. This substantially reduces hardware and postprocessing requirements while remaining fully compatible with modern telecommunication network infrastructure. The analysed star-like network topology can be extended with additional point-to-point or point-to-multipoint links, paving the way toward scalable and secure quantum communication in complex multi-user networks.

 \begin{acknowledgements}
 The authors thank Vladyslav Usenko, Ivan Derkach, and Akash Nag Oruganti for fruitful discussions. This work was funded by the QuantERA II Programme, which received funding from the European Union’s Horizon 2020 research and innovation programme under Grant Agreement No 101017733 and from the Austrian Research Promotion Agency (FFG), project number FO999891361.

\begin{figure}[htb]
\centering
\includegraphics[width=0.4\columnwidth]{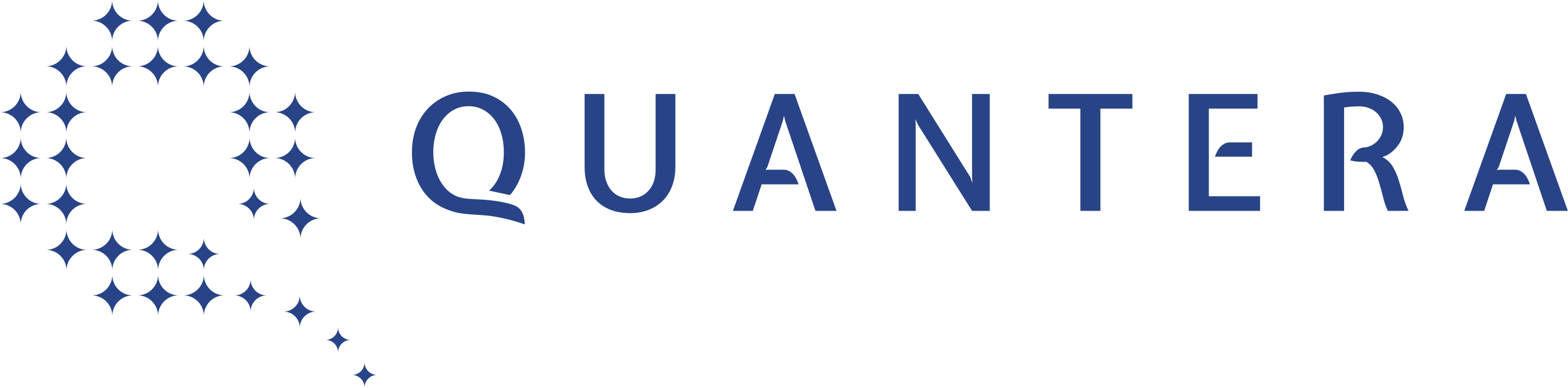}
\end{figure}

\begin{figure}[htb]
\centering
\includegraphics[width=0.8\columnwidth]{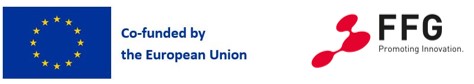}
\end{figure}

 \end{acknowledgements}

 \newpage
%********************************************************
%                    Appendix
%********************************************************

\appendix
\onecolumngrid
%--------------------------------------------------------
%               Protocol Description
%--------------------------------------------------------

\section{Protocol Description\label{sec:Protocol}}
We analyze a multi-user version of a general discrete-modulated CV-QKD protocol known from standard (two-user) QKD. Therefore, let us briefly describe the multi-user protocol analyzed in the present work. Let $N_{\text{St}} \in \mathbb{N}$ be the number of different signal states transmitted by a single Alice in the protocol and $N_B \in \mathbb{N}$ the number of Bobs (receivers) in the network. We introduce the short-notations $[m] := \{1,2,..., m\}$ and $[m]_{-1} := \{0,1,...,m-1\}$, where $m \in \mathbb{N}$. Then, the protocol steps read as follows.

\begin{itemize}
    \item[1] \textbf{\textit{State preparation---}} Alice prepares a coherent state $|\alpha\rangle$ with $\alpha \in \{ \alpha_0, ..., \alpha_{N_{St}-1}\}$ with probabilities $p_0,...,p_{N_{\text{St}}-1}$ according to some probability distribution. This state passes a quantum optical network of beamsplitters that split the signal equally into $N_B$ parts and distribute those $N_B$ signals to the $N_B$ Bobs via $N_B$ quantum channels.

    \item[2] \textbf{\textit{Measurement---}} Each of the Bobs receives the quantum signal and performs heterodyne detection, determining the $q$ and $p$ quadrature of the received state. The measurement outcome $y_k^{(i)} \in \mathbb{C}$ of  Bob $i$ is stored in his private register, where $i \in [N_B]$ and $k\in[N]$ is used to label the round.
 \end{itemize}   
 
    Steps 1 and 2 are repeated $N$ times.
    
 \begin{itemize}   
    \item[3] \textbf{\textit{Statistical testing ---}} Once the quantum phase of the protocol is completed, Alice and the $N_B$ Bobs communicate over the classical channel to perform statistical tests. If the test passes, they proceed with the protocol. Otherwise, they abort. Note that - depending on the specific trust assumptions between different users - the tests between Alice and certain Bobs might fail, while the tests between other users might pass.

    \item[4] \textbf{\textit{Reverse reconciliation key map---}} Each of the Bobs performs a reverse reconciliation key map, where he maps his measurement results $y_k^{(i)}$ to discrete values $z_k^{(i)} \in \{0, ..., N_{\text{St}}-1, \perp \}$. This phase allows for postselection, where discarded results are mapped to $\perp$.

    \item[5] \textbf{\textit{Error correction---}} Alice and each of the Bobs communicate over the classical channel to correct their raw keys $\tilde{x}$ and $\tilde{z}^{(i)}$ for $i\in [N_B]$.
    
    \item[6] \textbf{\textit{Privacy amplification---}} Finally, the communicating parties perform privacy amplification. Depending on the trust structure, the details of this phase differ slightly.
\end{itemize}
\subsection{QPSK Modulation and Key Map\label{sec:QPSKProtocol}}
While the presented arguments apply independently of the chosen discrete modulation scheme and the value of $N_{\mathrm{St}}$, for illustration purposes, we present numerical results for a quadrature phase-shift keying (QPSK) protocol, where Alice prepares $N_{\mathrm{St}} = 4$ states (each with probability $1/4$) that are evenly distributed on a circle with radius $|\alpha|$. Then, the key map, which is performed by each of the Bobs in step 4 of the protocol description above, reads as follows, $\forall j \in [N_B]:$

\begin{equation}\label{eq:defRegionsKeyMap}
            \tilde{z}_k^{(j)} = \left\{
\begin{array}{ll}
0 \text{ if } -\frac{\pi}{4} \leq \arg\left(y_k^{(j)}\right) <  \frac{\pi}{4} &\land~ \left|y_k^{(j)}\right| \geq \Delta_r, \\
1 \text{ if } \frac{\pi}{4} \leq \arg\left(y_k^{(j)}\right) <  \frac{3\pi}{4} &\land~  \left|y_k^{(j)}\right| \geq \Delta_r, \\
2 \text{ if } \frac{3\pi}{4} \leq \arg\left(y_k^{(j)}\right) <  \frac{5\pi}{4} &\land~  \left|y_k^{(j)}\right| \geq \Delta_r, \\
3 \text{ if } \frac{5\pi}{4} \leq \arg\left(y_k^{(j)}\right) <  \frac{7\pi}{4} &\land~  \left|y_k^{(j)}\right| \geq \Delta_r, \\
\perp  \textrm{ otherwise,} \\
\end{array}
\right.
        \end{equation}
where $\arg(y)$ denotes the argument of the complex number $y$, and $\Delta_r \geq 0$ is an optional postselection parameter.

%--------------------------------------------------------
%                       Security Proof 
%--------------------------------------------------------

\section{The Loss-Only Channel}\label{sec:AnaSecProof}
First, we restrict our considerations to the case of the loss-only channel, i.e., we assume that there is no noise. To simplify the presentation, we assume symmetric channels, i.e., the channels connecting Alice with each of the Bobs are all characterized by the same loss parameter $\eta$. However, our method is not restricted to this case and, at the cost of less instructive expressions, accommodates the general, non-symmetric case as well. We generalize the proof method from Ref.~\cite{Heid_2006} to the multi-user case. Note that \cite{Heid_2006} discusses only a two-state protocol with heterodyne detection; generalizations to protocols with four and higher number of states can be found, for example, in \cite{Lin_2019, Kanitschar_2021, Kanitschar_Thesis_2021}. In the absence of channel noise, it is known that Eve's optimal attack is the generalized beamsplitter attack, i.e., given that Alice prepared a coherent state $\ket{\alpha_x}$, for a channel with loss parameter $\eta$, Bob receives $\ket{\sqrt{\eta}\alpha_x}$, while Eve's share is $\ket{\epsilon_x}:=\ket{\sqrt{1-\eta}\alpha_x}$. Here, $x \in \left[N_{\mathrm{St}}\right]_{-1}$ labels the state Alice prepared. 
In what follows, we are always interested in the secure key rate between Alice and one of the Bobs, given certain trust assumptions on the other Bobs as described in the previous section. To simplify our presentation, we assume that the channels connecting Alice with each of the Bobs are identical, i.e., all channels are described by the same loss-parameter $\eta$. However, we want to emphasize that our argument also works for the general case at the cost of more complicated and less instructive expressions. Therefore, due to symmetry, the key rates between Alice and all of the (trusted) Bobs are the same. We denote the Bob with whom Alice aims to establish a secret key with $B_j$. Then, the asymptotic secure key rate for cases a) - c) is given by the well-known Devetak-Winter formula \cite{Devetak_Winter_2006}:
\begin{equation}\label{eq:DevetakWinterSimple}
    R^{\infty} = \beta I(A:B_j) - \chi(B_j:E),
\end{equation}
where $\beta$ is the reconciliation efficiency, $I(A:B_j)$ denotes the mutual information between Alice and Bob $j$ and $\chi(B_j:E)$ is the Holevo information, a quantity bounding Eve's information about the trusted Bob's quantum state. 

However, for d), where each of the Bobs aims to generate a key with Alice and does not want the other Bobs to hold any information about his key, it is required to take the other Bobs into account. As already discussed in Ref.~\cite{Bian_2023}, the Holevo information must be replaced with the maximum of the mutual information held by Eve or any of the other Bobs. Thus, the modified Devetak-Winter formula reads
\begin{equation}\label{eq:DevetakWinterD}
    R^{\infty} = \beta I(A:B_j) - \max\left\{ \chi(B_j:E), \max_{i\ne j}\{I(B_j,B_i)\}  \right\}.
\end{equation}
In both cases, the main task is to calculate the Holevo quantity, bounding Eve's information for the given situation, since quantifying the correlation between the Bobs for trust scenario d) does not involve Eve's system. Therefore, we aim to describe Eve's system as an orthonormal system.

\subsection{Direct Proof}\label{sec:DirectAnaProof}
 
To illustrate the idea and to point out an issue with a direct analytical approach, let us first focus on scenario c) from the previous section, where Alice trusts all Bobs and aims to establish keys with Bob $i$ and performs the QPSK protocol from Section \ref{sec:QPSKProtocol}. Assume Alice prepared $\ket{\alpha_x}$ in her lab. She splits the state into $N_B$ equal shares and distributes them over $N_B$ noiseless channels characterized by a transmittance $\eta$ to all Bobs. The Bob with whom Alice aims to establish keys then obtains $\ket{\sqrt{\frac{\eta}{N_B}} \alpha_x}$. Eve performs a generalized beamsplitting attack and obtains $\ket{\beta_x} :=\ket{\sqrt{\frac{1-\eta}{N_B}}\alpha_x}$ from each of the channels. The goal is to find a low and finite-dimensional way to represent and diagonalize this state, allowing us to calculate the Holevo quantity. 

For a general phase-shift keying protocol, we can express Eve's state given Alice prepared $\ket{\alpha_x}$ as a superposition of basis states $\ket{b_z}$,
\begin{align*}
    \ket{\Psi_x}_E = \sum_{z=0}^{3} c_z e^{i\arg(\beta_x) z} \ket{b_z}_E,
\end{align*}

 where the coefficients have the form 
\begin{equation}\label{eq:coeffs_c_z}
    c_z =  \sqrt{\sum_{\overset{a_1,...,a_{N_B-1}=0}{\sum_s a_s = z}}^{3} \prod\limits_{s=1}^{N_B} k_{a_s} },
\end{equation}
with
\begin{equation*}
    k_j := e^{-|\beta|^2}\sum_{s=0}^{\infty} \frac{|\beta|^{2(N_{\mathrm{St}}n_s + j)}}{\left(N_{\mathrm{St}} n_s+j\right)!} 
\end{equation*}
for $j\in \{0,1,2,3\}$. Note that we omitted the subscript $x$ in $\beta_x$ because of the symmetry of the considered QPSK protocol. The special form of $|\Psi_x\rangle_E$ allows an analytic calculation of the Holevo quantity $\chi(B:E) = H(\rho_E) - \sum_{j=0}^{3} P(Z=j) H(\rho_{E,j})$, where $\rho_{E,j}$ is Eve's conditional state given that Bob measured the symbol associated with $j$,
\begin{equation*}
    \rho_{E,j} := \sum_{x=0}^{3} \frac{P(X=x, Z=j)}{P(Z=j)} \ketbra{\Psi_x}_E
\end{equation*}
and $\rho_E := \sum_{j=0}^{3} P(Z=j) \rho_{E,j}$. 

The idea to show this statement is to expand the state in Fock representation and group those Fock states in congruence classes $\textrm{ mod } N_{\mathrm{St}}$. We then define $N_{\mathrm{St}}$ basis vectors. As we want the number of basis vectors to be independent of the number of Bobs, we group them in a special way that ensures mutual orthogonality. So, we first aim to write the state Eve branched off in a different way that allows us to regroup the expressions in an advantageous way.
 \begin{align*}
\ket{\Psi_x}_E &= \bigotimes_{b=1}^{N_B} \ket{\beta_x}_{E_b} = e^{-N_B\frac{|\beta_x|^2}{2}} \bigotimes_{b=1}^{N_B}\sum_{n_b=0}^{\infty}\frac{\beta_x^{ n_b}}{\sqrt{n_b!}} \ket{n_b}_{E_b} \\
%&=  e^{-N_B\frac{|\beta_x|^2}{2}} \sum_{n_1,...,n_{N_B}=0}^{\infty}\frac{\beta_x^{ n_1+...+n_{N_B}}}{\sqrt{n_1! ... n_{N_B}!}} \ket{n_1, ..., n_{N_B}} \\
&= e^{-N_B\frac{|\beta_x|^2}{2}} \sum_{n_1,...,n_{N_B}=0}^{\infty}\bigotimes_{b=1}^{N_B}\frac{\beta_x^{ n_b}}{\sqrt{n_b!}} \ket{n_b}_{E_b} \\
&= e^{-N_B\frac{|\beta_x|^2}{2}} \sum_{a_1,...,a_{N_B}=0}^{N_{\mathrm{St}}-1} \sum_{n_1,...,n_{N_B}=0}^{\infty} \bigotimes\limits_{b=1}^{N_B}\left(\frac{\beta_x^{N_{\mathrm{St}}n_b+a_b}}{\sqrt{\left(N_{\mathrm{St}} n_b+a_b \right)!}} \ket{N_{\mathrm{St}}n_b+a_b}_{E_b}\right)\\
&= e^{-N_B\frac{|\beta_x|^2}{2}} \sum_{z=0}^{N_{\mathrm{St}}-1} \sum_{\overset{a_1,...,a_{N_B}=0}{\sum_j a_j = z \text{ mod } N_{\mathrm{St}}}}^{N_{\textrm{St}}-1} \sum_{n_1,..., n_{N_B}=0}^{\infty} \bigotimes\limits_{b=1}^{N_B}\left(\frac{\beta_x^{N_{\mathrm{St}}n_b+a_b}}{\sqrt{\left(N_{\mathrm{St}} n_b+a_b \right)!}} \ket{N_{\mathrm{St}}n_b+a_b}_{E_b}\right)
\end{align*}   
In the third line we introduce the said congruence classes $0, 1, ..., N_{\mathrm{St}}-1$ in each of the systems, and in the last line, we split this sum into $N_{\mathrm{St}}$ parts satisfying $\sum_j a_j = z \text{  mod  } N_{\mathrm{St}}$ for $z \in \left[N_{\mathrm{St}}\right]_{-1}$. This allows us to define
    \begin{align*}
        \ket{\tilde{b}_z^{(x)}}_E := e^{-N_B\frac{|\beta_x|^2}{2}}   \sum_{\overset{a_1,...,a_{N_B}=0}{\sum_j a_j = z \text{ mod } N_{\mathrm{St}}}}^{N_{\textrm{St}}-1} \sum_{n_1,..., n_{N_B}=0}^{\infty} \bigotimes\limits_{b=1}^{N_B}\left(\frac{\beta_x^{N_{\mathrm{St}}n_b+a_b}}{\sqrt{\left(N_{\mathrm{St}} n_b+a_b \right)!}} \ket{N_{\mathrm{St}}n_b+a_b}_{E_b}\right).
    \end{align*}
Note that $\beta_x^{N_{\mathrm{St}}n_b+a_b} = \left| \beta_x\right|^{N_{\mathrm{St}}n_b+a_b} e^{i \arg(\beta_x) \left(N_{\mathrm{St}}n_b+a_b\right)}$. Given the symmetry of PSK protocols where $\arg(\alpha_x) = \frac{2\pi}{N_{\text{St}}} x$,  $e^{i \arg(\beta_x) \left(N_{\mathrm{St}}n_b\right)} = 1$, which allows us to rewrite the expression as
\begin{align*}
        \ket{\tilde{b}_z^{(x)}}_E := e^{-N_B\frac{|\beta|^2}{2}} e^{i \arg(\beta_x)z}  \sum_{\overset{a_1,...,a_{N_B}=0}{\sum_j a_j = z \text{ mod } N_{\mathrm{St}}}}^{N_{\textrm{St}}-1} \sum_{n_1,..., n_{N_B}=0}^{\infty} \bigotimes\limits_{b=1}^{N_B}\left(\frac{\left|\beta\right|^{N_{\mathrm{St}}n_b+a_b}}{\sqrt{\left(N_{\mathrm{St}} n_b+a_b \right)!}} \ket{N_{\mathrm{St}}n_b+a_b}_{E_b}\right),
    \end{align*}
which shows that $x$ only occurs in the exponential pre-factor. Next, we normalize those vectors. Let us denote 
\begin{equation*}
    k_j := e^{-|\beta|^2}\sum_{s=0}^{\infty} \frac{\left(|\beta|^2\right)^{N_{\mathrm{St}}n_s + j}}{\left(N_{\mathrm{St}} n_s+j\right)!} 
\end{equation*}
for $j\in \left[N_{\mathrm{St}}\right]_{-1}$. Note that due to symmetry implied by phase-shift keying modulation, for all $x$, $k_j$ is independent of $x$. Consequently, to ease notation, we omitted the descriptor $x$. Then the normalization constant $c_z$ for $\ket{\tilde{b}_z^{(x)}}_E$, which is independent of $x$,  reads
\begin{align}
    c_z = \sqrt{\braket{\tilde{b}_z^{(x)}}} = \sqrt{\sum_{\overset{a_1,...,a_{N_B-1}=0}{\sum_s a_s = z}}^{N_{\mathrm{St}}-1} \prod\limits_{s=1}^{N_B} k_{a_s} }, \label{eq:c_z}
\end{align}
and the normalized system is given by
\begin{equation*}
    \ket{b_z^{(x)}}_E := \frac{1}{c_z} \ket{\tilde{b}_z^{(x)}}_E,
\end{equation*}
for $z \in \left[N_{\text{St}}\right]_{-1}$.

This allows us to express Eve's state given Alice prepared $\ket{\alpha_x}$ in a particularly simple form
\begin{align*}
    \ket{\Psi_x}_E = \sum_{z=0}^{N_{\mathrm{St}}-1} c_z e^{i\arg(\beta_x) z} \ket{b_z}_E,
\end{align*}
where we pulled out the $x$-dependent exponential factor from $\ket{b_z^{(x)}}_E$, defining $\ket{b_z}_E := e^{-i\arg(\beta_x) z} \ket{b_z^{(x)}}_E$. Next, we notice the following
\begin{lemma}\label{lem:Lemma1}
    For $ \ket{b_{z}}_E$ as defined above and $z_1,z_2 \in \left[N_{\mathrm{St}}\right]_{-1}$ we have
    $z_1 \neq z_2 ~\Rightarrow ~ \ket{b_{z_1}}_E \perp \ket{b_{z_2}}_E$.
\end{lemma}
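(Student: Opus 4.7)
The plan is to show that $\ket{\tilde{b}_{z_1}^{(x)}}_E$ and $\ket{\tilde{b}_{z_2}^{(x)}}_E$ have disjoint supports in the multi-mode Fock basis for $z_1 \neq z_2$; orthogonality of the normalized, phase-shifted vectors $\ket{b_{z_i}}_E$ then follows immediately since normalization and a global phase do not affect orthogonality.

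First, I would observe that the multi-mode Fock states $\bigl\{\bigotimes_{b=1}^{N_B}\ket{m_b}_{E_b}\bigr\}_{m_b\in\mathbb{N}_0}$ form an orthonormal basis of the joint Eve system, and that every nonnegative integer $m_b$ admits a unique decomposition $m_b = N_{\mathrm{St}} n_b + a_b$ with $n_b \in \mathbb{N}_0$ and $a_b \in [N_{\mathrm{St}}]_{-1}$ (Euclidean division). Consequently, in the definition of $\ket{\tilde b_z^{(x)}}_E$, each multi-index $(m_1,\dots,m_{N_B})$ appears at most once in the expansion, with its associated $a_b$ fixed as $a_b = m_b \bmod N_{\mathrm{St}}$.

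Next, I would note the arithmetic identity
\begin{equation*}
\sum_{b=1}^{N_B} m_b \;\equiv\; \sum_{b=1}^{N_B} a_b \pmod{N_{\mathrm{St}}},
\end{equation*}
since each $N_{\mathrm{St}} n_b$ contributes $0$ modulo $N_{\mathrm{St}}$. Therefore the selection rule $\sum_b a_b \equiv z \pmod{N_{\mathrm{St}}}$ imposed in the definition of $\ket{\tilde b_z^{(x)}}_E$ is equivalent to restricting the sum to those Fock multi-indices whose total photon number is congruent to $z$ modulo $N_{\mathrm{St}}$. Hence $\ket{\tilde b_z^{(x)}}_E$ lies entirely in the subspace $\mathcal{H}_z := \mathrm{span}\bigl\{\bigotimes_b\ket{m_b} : \sum_b m_b \equiv z \!\!\pmod{N_{\mathrm{St}}}\bigr\}$.

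Finally, I would close the argument by observing that for $z_1, z_2 \in [N_{\mathrm{St}}]_{-1}$ with $z_1\neq z_2$, the index sets defining $\mathcal{H}_{z_1}$ and $\mathcal{H}_{z_2}$ are disjoint, so $\mathcal{H}_{z_1} \perp \mathcal{H}_{z_2}$ by orthonormality of the Fock basis, and in particular $\braket{\tilde b_{z_1}^{(x)}}{\tilde b_{z_2}^{(x)}}_E = 0$. Rescaling by $1/c_{z_i}$ and multiplying by the scalar $e^{-i\arg(\beta_x) z_i}$ preserves orthogonality, yielding $\braket{b_{z_1}}{b_{z_2}}_E = 0$. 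There is no real obstacle here; the only point requiring care is recognizing that the congruence-class grouping in the definition was designed precisely so that the $N_{\mathrm{St}}$ resulting vectors are supported on mutually disjoint total-photon-number sectors $\bmod\,N_{\mathrm{St}}$, which is a simple but crucial observation that also explains why the basis size stays at $N_{\mathrm{St}}$ independently of $N_B$.
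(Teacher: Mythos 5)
Your proof is correct and rests on the same essential fact as the paper's: for $z_1\neq z_2$ the two vectors have disjoint support in the multi-mode Fock basis, because the residues $a_b = m_b \bmod N_{\mathrm{St}}$ are fixed by the Fock indices and the selection rules for $z_1$ and $z_2$ are incompatible. The only difference is packaging --- the paper compares summands pairwise and finds one tensor factor with differing residue, while you identify the conserved quantity $\sum_b m_b \bmod N_{\mathrm{St}}$ and place each $\ket{\tilde b_z^{(x)}}_E$ in its own total-photon-number sector $\mathcal{H}_z$, which is a slightly cleaner way of saying the same thing.
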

\begin{proof}
If $z_1\neq z_2$, then it holds for all $a_1,...,a_{N_B}$ and $d_1,...,d_{N_B} \in \left[N_{\mathrm{St}}\right]_{-1}$ that $z_1 = \sum_j a_j \neq \sum_j d_j = z_2$. Then, we find at least one $j$ such that $a_j \neq d_j$. Thus, in every summand of $\ket{b_{z_1}}_E$ there is (at least) one Fock state $\ket{N_{\mathrm{St}}n_j+a_j}_{E_j}$ that is orthogonal to its counterpart $\ket{N_{\mathrm{St}}n_j+d_j}_{E_j}$. Consequently, the $ \ket{b_{z}}_E$ are mutually orthogonal.
\end{proof}

Since, thanks to Lemma \ref{lem:Lemma1}, those vectors are now not only normalized but orthogonal, the $\left\{\ket{b_z}_E \right\}_z$ form an orthonormal system, hence a basis. This finally allows an analytic calculation of the Holevo quantity $\chi(B:E) = H(\rho_E) - \sum_{j=0}^{N_{\mathrm{St}}-1} P(Z=j) H(\rho_{E,j})$, where $\rho_{E,j}$ is Eve's conditional state given that Bob measured the symbol associated with $j$,
\begin{equation*}
    \rho_{E,j} := \sum_{x=0}^{N_{\mathrm{St}}-1} \frac{P(X=x, Z=j)}{P(Z=j)} \ketbra{\Psi_x}_E
\end{equation*}
and $\rho_E := \sum_{j=0}^{N_{\mathrm{St}}-1} P(Z=j) \rho_{E,j}$. 

For illustration, we consider quadrature phase-shift keying (QPSK) where $N_{\mathrm{St}} = 4$ and assume the most simple nontrivial case $N_B = 2$. Then, the $k_j$ take a particularly simple form, namely
\begin{align*}
    k_0 &= e^{-|\beta|^2} \frac{\cosh\left( |\beta|^2 \right) + \cos\left( |\beta|^2 \right)}{2} \\
    k_1 &= e^{-|\beta|^2} \frac{\sinh\left(|\beta|^2 \right) + \sin\left( |\beta|^2 \right)}{2}\\
    k_2 &= e^{-|\beta|^2} \frac{\cosh\left( |\beta|^2 \right) - \cos\left( |\beta|^2 \right)}{2} \\
    k_3 &= e^{-|\beta|^2} \frac{\sinh\left( |\beta|^2 \right) - \sin\left( |\beta|^2 \right)}{2}
\end{align*}
and the coefficients $c_z$ read
\begin{align*}
    c_0 &= \sqrt{k_0^2 + 2 k_1k_3+k_2^2}\\
    c_1 &= \sqrt{2 k_0 k_1+ 2 k_2 k_3}\\
    c_2 &= \sqrt{k_1^2 + 2 k_0k_2+k_3^2}\\
    c_3 &= \sqrt{2 k_0 k_3+ 2 k_1 k_2}.
\end{align*}
For $N_B = 1$ we recover the single Bob case, already discussed in \cite{Lin_2019, Kanitschar_2021}.

However, as the number of summands in (\ref{eq:coeffs_c_z}) grows exponentially with the number of Bobs involved, i.e., as $4^{N_B-1}$. So, even for QPSK with $N_{\mathrm{St}} = 4$ the calculation of the coefficients $c_z$ becomes already time-consuming (and potentially inaccurate) for double-digit $N_B$. Thus, it is evident that one requires a more efficient method for a larger number of Bobs (and/or a larger number of signal states). The same applies to the case where we trust some but not all of the Bobs, which, in principle, can be done similarly, but suffers from the same computational issue.

\subsection{Reduction to Single Bob Case}\label{sec:ReductionSingleBob}
As the optimal attack in the loss-only scenario is known, this  allows for a reduction from the $N_B$ Bobs to a single Bob. For the following argument, we consider the most general scenario b), where Alice distributes quantum signals to $N_B$ Bobs where only $M_B$ are trusted, while all other $N_B-M_B$ Bobs are assumed to collaborate with Eve. Since cases a) and c) are special cases thereof, they follow from the general case by setting $M_B = 1$ and $M_B = N_B$. We do not restrict our considerations to a specific (discrete) modulation pattern, hence, in general, Alice samples states from an arbitrary discrete set $\{\alpha_0, ..., \alpha_{N_{\mathrm{St}}-1}\}$ according to some discrete distribution. To simplify the argument, we assume that the channels between Alice and each of the Bobs are characterized by the same loss parameter $\eta$. By setting $\eta:= \max_i\{\eta_i\}$, we obtain valid lower bounds for the key rate in the asymmetric case. However, a tight derivation for the general non-symmetric case can be done along similar lines, leading to less instructive explanations and significantly more complicated expressions.

We start by modeling the channel behavior. Since we assume $M_B$ Bobs are trusted, we can attribute the $N_B-M_B$ untrusted outputs of Alice's lab directly to Eve. Thus, we replace the $1:N_B$ beamsplitter in Alice's lab by a $\frac{M_B}{N_B}:\frac{N_B-M_B}{N_B}$ beamsplitter BS 1, where the second output $\ket{\sqrt{1-\frac{M_B}{N_B}} \alpha_x}_{E'}$ is directly routed to Eve. The trusted output $\ket{\sqrt{\frac{M_B}{N_B}} \alpha_x}_{B'}$ hits another (multi-port) beam-splitter BS 2, dividing the input signal into $M_B$ equal shares $\bigotimes_{l=1}^{M_B}\ket{\frac{1}{\sqrt{N_B}}\alpha_x}_{B_l''}$, one for each trusted Bob. Each of those signals propagates through a separate lossy quantum channel, where according to the general beam splitting attack (with a set of beam splitters, BS 3), Eve obtains each a share $\ket{\sqrt{\frac{1-\eta}{N_B}}\alpha_x}_{E_l}$, while each of the Bobs receives $\ket{\sqrt{\frac{\eta}{N_B}}\alpha_x}_{B_l}$. 

The whole procedure can be summarized as,
\begin{align*}
    &\ket{\alpha_x} \stackrel{\textrm{BS 1}}{\mapsto}\\ & \ket{\sqrt{\frac{M_B}{N_B}}\alpha_x}_{A'} \otimes \ket{\sqrt{1-\frac{M_B}{N_B}}\alpha_x}_{E'} \stackrel{\textrm{BS 2}}{\mapsto}\\
    &  \bigotimes\limits_{l=1}^{M_B} \ket{\frac{1}{\sqrt{N_B}} \alpha_x}_{A_l''} \otimes \ket{\sqrt{1-\frac{M_B}{N_B}}\alpha_x}_{E'} \stackrel{\textrm{BS 3}}{\mapsto}\\
    &  \bigotimes\limits_{l=1}^{M_B}\left( \ket{\sqrt{\frac{\eta}{N_B}} \alpha_x}_{B_l}\otimes \ket{\sqrt{\frac{1-\eta}{N_B}}\alpha_x}_{E_l}\right) \otimes \ket{\sqrt{1-\frac{M_B}{N_B}}\alpha_x}_{E'}.
\end{align*}

In the following, we make use of two facts:
(i) We can efficiently solve the single Bob case in which Eve holds a single coherent state, and (ii) unitary transformations on Eve's systems leave her Holevo information invariant. Therefore, we try to unitarily combine all separate coherent states that Eve holds into a single coherent state. We use photon number conservation as a shortcut to obtain the magnitude of the single coherent state of Eve and search for a unitary transformation that combines the outputs $E'$ and all the $E_l$ such that Eve holds $\ket{\sqrt{1-\frac{M_B \eta}{N_B}}\alpha_x}_E$ while having the vacuum state in all other systems. In other words, we ask if there exists a linear optics transformation with complex coefficients $a$ and $b_l$ for $l \in \left[M_B\right]$ such that
\begin{align*}
    a \sqrt{1-\frac{M_B}{N_B}} \alpha_x + \sum_{l=1}^{M_B} b_l \sqrt{\frac{1-\eta}{N_B}} \alpha_x &= \sqrt{1-\frac{M_B\eta}{N_B}} \alpha_x\\
    |a|^2+\sum_{l=1}^{M_B}|b_l|^2 &= 1
\end{align*}
holds. Due to symmetry we anticipate that $\forall l \in \left[M_B\right]:~b_l = \frac{b}{\sqrt{M_B}}$ holds
%and we obtain
%\begin{align*}
%    a \sqrt{N_B-M_B} + M_B \frac{b}{\sqrt{M_B}} \sqrt{1-\eta} = \sqrt{N_B-M_B\eta}.
%\end{align*}
%Since $a$ and $b$ correspond to the inputs of the same (last) beamsplitter, we know that $a^2+b^2=1$ holds. Thus, we can simplify further
%\begin{align*}
%    a \sqrt{N_B-M_B} + \sqrt{1-a^2} \sqrt{M_B(1-\eta)} = \sqrt{N_B-M_b \eta},
%\end{align*}
%which has 
and obtain the solution
\begin{align*}
    a &= \sqrt{\frac{N_B-M_B}{N_B-M_B\eta}}, \\
    b &= \sqrt{\frac{M_B(1-\eta)}{N_B-M_B\eta}}.
\end{align*}
Hence, we showed that Eve can superpose all her coherent states into a single mode $\ket{\Psi_x}_E=\ket{\sqrt{1-\frac{M_B\eta}{N_B}}\alpha_x}_E$, which is in a tensor product with vacuum states only by quantum optical operations (i.e., beam splitters). Thus, we have effectively simplified the general $N_B$ user scenario to a single Alice to Bob scenario with a modified Eve term, which allows us to calculate a lower bound on the secure key rate without exponentially growing terms like the direct approach in Section~\ref{sec:DirectAnaProof}. In fact, the computational complexity is now independent of the number of Bobs and, therefore, scalable. 
Note that by setting $M_B = 1$, we obtain $\ket{\sqrt{1-\frac{\eta}{N_B}}\alpha_x}_E$, which is the result we expect for the particular case of trust-scenario a), and for $M_B = N_B$, corresponding to trust-scenario c), we obtain $\ket{\sqrt{1-\eta}\alpha}_E$, which as well meets expectations for this particular case. The Holevo quantity can then be calculated along the same lines as in the previous section and we yield the same results as with the direct method, but significantly quicker. The mutual information between Alice and Bob $i$ is calculated similarly to the single-user case, always keeping in mind that the key-generating Bob only receives $\ket{ \sqrt{\frac{\eta}{N_B}}\alpha_x}$. A similar consideration allows for calculating the mutual information between Bob $i$ and other Bobs. Then, the asymptotic secure key rate for scenarios a) - d) follows immediately. We present results for key rates in Figure~\ref{fig:ResultsKRAna}.

\section{The Lossy \& Noisy Channel}\label{sec:NumPrfMethod}
Having dealt with the loss-only case in the previous section, it remains to derive secure key rates for the multi-user scenario for general (i.e., lossy \& noisy) channels. The results we obtained for the loss-only case then serve as a benchmark for low noise parameters in the general case and represent upper bounds on those general key rates. We analyze the general case for arbitrary discrete modulation using the numerical security proof method introduced in Refs.~\cite{Coles_2016,Winick_2018} and applied in Refs.~\cite{Lin_2019, Upadhyaya_2021, Kanitschar_2023} to DM CV-QKD, focussing on the asymptotic case. In what follows, we briefly explain the idea of the security proof method for the single Bob case and show how it is adapted and applied to the multi-user case. We refer to Refs.~\cite{Coles_2016,Winick_2018, Lin_2019, Upadhyaya_2021} for further details.

\subsection{Introduction to the Numerical Method Used}\label{sec:SummaryNumMethod}
We aim to calculate secure key rates for the protocol introduced in Section \ref{sec:Protocol}, where Bob performs heterodyne detection, described by a POVM $\left\{\ketbra{\zeta} \right\}_{\zeta \in \mathbb{C}}$, while in the source-replacement \cite{Curty_2004, Ferenzci_2012} picture, Alice's POVM reads $\{\ketbra{i}\}_{i=0}^{N_{\mathrm{St}}-1}$.

Applying the dimension reduction method \cite{Upadhyaya_2021}, one can rewrite the Devetak-Winter formula as the following optimization problem.
\begin{equation}
    R^{\infty} = \min_{\rho \in \mathcal{S}} H\left(Z|E\right)_{\Phi(\rho)} - \delta^{\mathrm{EC}} - \Delta(W),
\end{equation}
where $\mathcal{S}$ is a subset of the set of all density operators defined by physical requirements and experimental observations, $\Phi$ is a quantum channel describing the protocol steps, $\delta^{\mathrm{EC}}$ denotes the error-correction leakage, $W \geq \Tr{\rho \Pi^{\perp}}$ is the so-called weight of the state outside some finite-dimensional cutoff space $\mathcal{H}^{\text{fin}}$ of $\mathcal{H}$ with $\Pi^{\perp}$ being the projection onto the complement of that space, and
\begin{equation}\label{eq:WeightCorrectionTerm}
    \Delta(W) := \sqrt{W} \log_2(|Z|) + (1+\sqrt{W}) h\left( \frac{\sqrt{W}}{1+\sqrt{W}} \right)
\end{equation}
is the improved weight-dependent correction term from~\cite{Upadhyaya_Thesis_2021}. 
%As shown in \cite{Coles_2016, Winick_2018}, the Devetak-Winter formula for the asymptotic key rate can be rewritten as the following optimization problem
%\begin{equation}
%    R^{\infty} = \min_{\rho_{AB} \in \mathcal{S}} D\left(\mathcal{G}(\rho_{AB}) || \mathcal{Z}\left(\mathcal{G}\left(\rho_{AB} \right) \right) \right) - p_{\text{pass}}\delta_{\mathrm{EC}},
%\end{equation}
%where $\mathcal{S}$ is a subset of the set of all density operators defined by physical requirements and experimental observations, $D(.||.)$ is the quantum relative entropy, $\mathcal{G}$ is a CPTP map describing postprocessing steps, $\mathcal{Z}$ is a pinching channel, $p_{\text{pass}}$ denotes the probability that a certain round passes postselection and $\delta_{\mathrm{EC}}$ is the error-correction leakage. 
Intuitively, the task is to minimize the key rate over all density matrices compatible with the observations. It turns out that this optimization problem can be rewritten as a semi-definite program with a non-linear objective function. The method introduced in Refs. \cite{Coles_2016, Winick_2018} tackles this optimization problem by a two-step process. In the first step, a linearized version of the problem is solved iteratively, using, for example, the Frank-Wolfe algorithm \cite{Frank_Wolfe_1956}. The output of step 1, which is only an approximate solution to the problem, is then turned into a reliable lower bound in step 2, using SDP duality theory.
Additionally, the influence of numerical constraint-violations on the result is taken into account. The gap between step 1 and step 2 can be used to monitor the result's quality. In almost all cases, the results of step 1 and step 2 (which represent an upper- and lower bound on the secure key rate, respectively) differ only negligibly, indicating tight lower bounds.

This method allows us to consider untrusted ideal and trusted non-ideal detectors, as described in \cite{Lin_2020}. Following the notation there, we denote the trusted detection efficiency of Bob's detectors by $\eta_D$ and the trusted electronic noise by $\nu_{\mathrm{el}}$. For simplicity, we assume that both homodyne detectors, forming one heterodyne detector, are characterized by the same parameters $(\eta_D, \nu_{\mathrm{el}})$. Note that this can be easily achieved by choosing $\eta_D := \min(\eta_{D,1}, \eta_{D,2})$ and $\nu_{\mathrm{el}} := \max(\nu_{\mathrm{el},1}, \nu_{\mathrm{el},2})$, which leads to slightly pessimistic, but secure lower bounds.

The method and its extensions introduced in the present paper build up upon code from Ref. \cite{Kanitschar_2023} (which builds up upon code by Ref. \cite{Upadhyaya_2021}) and was implemented in \textsc{Matlab}\textsuperscript{\textregistered} version R2022a. The convex optimization problems are modeled using CVX \cite{cvx1,cvx2}, and we used the MOSEK solver (version 10.0.34) \cite{mosek} to solve semidefinite programs numerically.

\subsection{Generalization to Multiple Bobs}\label{sec:genMultBobs}
Now, let us generalize this method to the case of multiple Bobs. As untrusted Bobs are assumed to collaborate with Eve, we can immediately reduce the general case to scenario b) by attributing all information that goes to untrusted Bobs directly to Eve. Since in QKD, we conservatively assume that all losses are due to Eve, we can include untrusted Bobs simply by modifying $\eta \mapsto \frac{M_B}{N_B}\eta$. Without loss of generality, to simplify notation, we assume that the first $M_B$ Bobs are trusted. The crucial part for all four scenarios is to quantify Eve's information about the key. Since scenarios a) and c) are special cases of scenario b), and the expression for Eve's information about the key in scenario d) is the same as in scenario b) (see Eqs. (\ref{eq:DevetakWinterSimple}) and (\ref{eq:DevetakWinterD})), we can discuss the main task for all four cases at once.

We analyze the generalized problem within the postprocessing framework of Refs.~\cite{Lin_2019, Upadhyaya_2021}. Therefore, let $\mathcal{S}$ denote the set of density matrices compatible with the experimental observations and the requirement that Alice's reduced density matrix remains unchanged. Let us denote Alice's, the Bobs' and Eve's quantum systems by $A$, $B_i$, and $E$, respectively, whereas Eve's system purifies Alice's and the Bobs' joint density matrix, i.e., $\rho_{AB_1...B_{M_B}E}$ is pure. Furthermore, let $\bar{A}$ and $\bar{B}_i$ denote Alice's and each of the Bobs' private registers, and let us use tildes to denote their public registers, $\tilde{A}$, $\tilde{B_i}$, for $i\in[M_B]$. Let be $j$, the index for the key generating Bob. The values stored in the register are drawn from alphabets $S_a, S_b, S_{\alpha}$ and $S_{\beta}$, respectively. Finally, let $X$ and $Z$ denote Alice's and Bob $j$'s key register. Alice's and the Bobs' measurements can be described by a measurement channel $\mathcal{M}$,
\begin{equation}
    \begin{aligned}
        &\mathcal{M}^{(j)}\left(\rho_{AB_1...B_{M_B}E}\right)\\
        &= \sum_{k,l} \ketbra{a_k}_{\tilde{A}}\otimes\ketbra{\alpha_k}_{\bar{A}}\otimes\ketbra{b_l}_{\tilde{B_j}}\otimes \ketbra{\beta_l}_{\bar{B}_j}\\
        &~~~\otimes\left[\sqrt{P_A^{(k)}} \otimes \sqrt{P_{B_j}^{(l)}} \right] \rho_{AB_TE} \left[\sqrt{P_A^{(k)}} \otimes \sqrt{P_{B_j}^{(l)}} \right],
    \end{aligned}
\end{equation}
where $\left\{P_{B_j}^{(l)}\right\}$ denotes Bob $j$'s POVM, and we implicitly assume that on registers we do not mention explicitly, we perform the identity. The action of the key map can be represented by the following isometry
\begin{equation}
    \begin{aligned}
        V = \sum_{S_a, S_b, S_{\beta}} \ket{g(a,b,\beta)}_Z \otimes \ketbra{a}_{\tilde{A}}\otimes \ketbra{b}_{\tilde{B}_j}\otimes \ketbra{\beta}_{\bar{B}_j},
    \end{aligned}
\end{equation}
where the $g: S_a \times S_b \times S_{\beta} \rightarrow \{0,...,N_{\text{St}}, \perp\}$ is the key map function and $\perp$ is the symbol we use to denote discarded signals. Then, the final classical-quantum state between the key register and Eve reads
\begin{equation}\label{eq:PSmapPhi}
\begin{aligned}
    \tilde{\rho}_{Z[E]} &= V \text{Tr}_{A\bar{A}B_1...B_{j-1}B_j\bar{B}_jB_{j+1}...B_{M_B}}\left[\mathcal{M}\left(\rho_{AB_TE} \right)\right] V^{\dagger}\\
    &=\text{Tr}_{A\bar{A}B_1...B_{j-1}B_j\bar{B}_jB_{j+1}...B_{M_B}}\left[V\mathcal{M}\left(\rho_{AB_TE} \right)V^{\dagger}\right] \\
    =&: \Phi\left( \rho_{AB_1...B_{M_B}E}\right).
\end{aligned}
\end{equation}
The starting point for our examination in scenario b) is the Devetak-Winter formula, which can be rewritten as
\begin{align*}
    R^{\infty} &= I(X:Z) - I(Z:E)
    %&= H(Z) - H(Z|X)- \left(H(Z)-H(Z|E)\right)\\
    = H(Z|E)-H(Z|X),
\end{align*}
where the occurring quantities are evaluated on the state processed by the communicating parties. Here, $Z$ and $X$ denote Bob $j$'s and Alice's key string. Since we aim for a lower bound on the secure key rate, we have to minimize this expression over all states compatible with all (trusted) Bobs' observations. We replace the second term with the actual error correction leakage $\delta_{EC}^{\text{leak}}$ and thus are only left with an optimization over the first term. We obtain
\begin{align}
    R^{\infty} = \min_{\mathcal{S}} H(Z|E)_{\tilde{\rho}_{ZE}} - \delta_{EC}^{\text{leak}}.
\end{align}
For scenario d), along similar lines, we can rewrite the key rate formula from Eq. (\ref{eq:DevetakWinterD}) as
\begin{equation}
R^{\infty} = \min\left\{\min_{\mathcal{S}} H(Z|E)_{\tilde{\rho}_{ZE}}, \min_i\{H(Z|B_i) \} \right\} - \delta_{EC}^{\text{leak}}.
\end{equation}

Our main task in what follows is to determine $\min_{\mathcal{S}} H(Z|E)_{\tilde{\rho}_{ZE}}$, which is the same for both cases. Following the ideas in \cite{Coles_2012}, we can simplify $ H(Z|E)_{\tilde{\rho}_{ZE}}$ further. To simplify notation, we extend the short notation $B_T$ for $B_1...B_{M_B}$ to $B_{T\setminus j}$ in case we mean $B_1, ..., B_{j-1}, B_{j+1},...,B_{M_B}$, i.e., to denote all Bobs' registers but $B_j$. Then, note that $\tilde{\rho}_{AB_T} = \sum_{i} Z_i \rho_{AB_T} Z_i$ for $Z_i$ being orthogonal projectors with $\sum_i Z_i = \mathbbm{1}$, which allows us to rewrite the term we want to minimize
\begin{align*}
     &H(Z|E)_{\tilde{\rho}_{ZE}} \\
     =& H(\tilde{\rho}_{ZE}) - H(\rho_E)\\
     =& H(\tilde{\rho}_{AB_T})- H(\rho_{AB_T})\\
     =& - \Tr{\tilde{\rho}_{AB_T} \log_2 \tilde{\rho}_{AB_T} } - H(\rho_{AB_T}) \\
     =&\Tr{-\rho_{AB_T} \log_2 \tilde{\rho}_{AB_T}+ \rho_{AB_T} \log_2 \rho_{AB_T}}\\
     =&D\left(\rho_{AB_T} \right|\!\left| \tilde{\rho}_{AB_T}\right).
\end{align*}
The first equality is the definition of the conditional entropy; for the second equality, we use that $\rho_{AB_TE}$ is pure, and the fourth equality exploits the fact that $\tilde{\rho}_{AB_T}$ is already block-diagonal, hence we can replace the first $\tilde{\rho}_{AB_T}$ by $\rho_{AB_T}$ without changing the result. Finally, for the last equality, we use the definition of the quantum relative entropy. A more detailed argument follows the lines of \cite[Appendix A]{Lin_2019}. This justifies using the numerical method from Refs.~\cite{Coles_2016, Winick_2018}, tailored for the quantum relative entropy. 

Next, let us discuss the optimization problem for the objective protocol in detail. Recall that Alice prepares one out of $N_{\textrm{St}}$ quantum states and sends them to the Bobs, who are equipped with heterodyne detectors. In every testing round, one of the Bobs measures to determine the expected value of certain observables (in our case $\hat{n}_{\beta_i}$ and $\hat{n}_{\beta_i}^2$ as we will discuss later). In key-generation rounds, all but one Bob act passively, while the key-generating Bob's POVM is $\left\{\frac{1}{\pi}\ketbra{\zeta}\right\}_{\zeta \in \mathbb{C}}$. Finally, in the source-replacement picture~\cite{Curty_2004, Ferenzci_2012}, Alice's POVM is given by $\left\{\ketbra{i} \right\}_{i=0}^{N_{\text{St}}-1}$. For key generation, we perform a key map that assigns measurement outcomes lying in a certain region of phase space some logical bit-value. Therefore, we need to combine the POVM elements corresponding to a certain region to obtain the corresponding coarse-grained POVM (see also Ref.~\cite{Lin_2019})
\begin{equation}
R_{B_j}^z = \frac{1}{\pi} \int_{A_z} \ketbra{\zeta} d^2\zeta,
\end{equation}
where $A_z$ labels the region. For the objective QPSK protocol, $z\in\{0,1,2,3\}$ and $A_z$ are wedges in phase space, defined in Eq. (\ref{eq:defRegionsKeyMap}). As outlined in \cite{Upadhyaya_2021}, the key map for our protocol simply copies Bob's private register $\bar{B}$ to $Z$, hence we can relabel $\bar{B}$ to $Z$. Thus, we finally obtain for the postprocessing map $\Phi$, defined in Eq. (\ref{eq:PSmapPhi}) 
\begin{equation}
\begin{aligned}
      &\Phi\left(\rho_{A B_T E}\right) =\\
      &~\sum_{z=0}^{N_{\text{St}}-1}\ket{z}_Z\otimes \text{Tr}_{AB_T }\left[\rho_{AB_TE}\left(R_{B_j} ^{z}\ast\mathbbm{1}_{AB_{T\setminus j} }\right)\right],  
\end{aligned}
\end{equation}
where we introduced the notation $\hat{O}_{B_j} \ast \mathbbm{1}_{AB_{T\setminus j} }:= \mathbbm{1}_{AB_{1,...,j-1} }\otimes \hat{O}_{B_j} \otimes \mathbbm{1}_{AB_{j+1,...,M_B} }$ for the operator $\hat{O}$ acting on the $j$-th Bob system and identity operators acting on Alice' and all other Bobs' systems. Furthermore, according to Ref.~\cite[Appendix A]{Lin_2019}, we omitted redundant registers. 

Similar to the single-Bob case \cite{Upadhyaya_2021}, the expected coherent state Bob receives in the case of a loss-only channel is $\ket{\beta_i} = \hat{D}(\beta_i)\ket{0}$, where $\beta_i := \sqrt{\eta}\alpha_i$ and $\{\beta_i\}_{i=0}^{N_{\text{St}}-1}$ is a set of complex numbers that we use to parametrize basis vectors and observables later on. However, considering lossy \& noisy channels, we expect Bob to receive a displaced thermal state. This motivates the basis of displaced number states $\ket{n_{\beta_i}} = \hat{D}(\beta_i)\ket{n}$ as an efficient choice for a basis for our finite-dimensional subspace. This choice will allow us to describe the received state sufficiently well with a small number of basis states (as in the limit of no noise, only one vector suffices). Since this applies to each Bob, we naturally generalize the choice for the single-Bob case to the multi-Bob case and choose the following projection 
\begin{equation}\label{eq:ProjectionNc}
    \Pi^{N_c}_{\beta_i, M_B} := \bigotimes_{k=1}^{M_B} \sum_{n=0}^{N_c} \ketbra{n_{\beta_i}}.
\end{equation}

This already sets the stage for the generalization of observables used in \cite{Upadhyaya_2021} to the multi-Bob case, as it will turn out to be beneficial if they commute with the projection onto the finite-dimensional state we just chose. We choose the observables to be $\hat{n}_{\beta_i}:=\hat{D}(\beta_i)\hat{n}\hat{D}^{\dagger}(\beta_i)$ and  $\hat{n}^2_{\beta_i}:=\hat{D}(\beta_i)\hat{n}^2\hat{D}^{\dagger}(\beta_i)$ which are easily accessible by heterodnye measurement (for details see \cite{Upadhyaya_Thesis_2021}). Then, the set of observables for our protocol read
\begin{equation}
    \begin{aligned}
        \left\{ \hat{\Gamma}_{i,k}, \hat{\Gamma}^{\text{sq}}_{i,k} \right\}_{i\in \left[N_{\text{St}}\right]_{-1}}^{k\in \left[M_B\right]}.
    \end{aligned}
\end{equation}
with
\begin{equation}\label{eq:GammaHats}
    \begin{aligned}
        \hat{\Gamma}_{i,k}&:= \ketbra{i}_A \otimes \left[\left(\hat{n}_{\beta_i}\right)_{B_k}\ast \mathbbm{1}_{B_{T\setminus k}}\right],\\
        \hat{\Gamma}^{\text{sq}}_{i,k} &:= \ketbra{i}_A \otimes \left[\left(\hat{n}^2_{\beta_i}\right)_{B_k}\ast \mathbbm{1}_{B_{T\setminus k}}\right].
    \end{aligned}
\end{equation}
Note, that they commute with $\Pi^{N_c}_{\beta_i, M_B}$ from Eq. (\ref{eq:ProjectionNc}). 

It remains to determine the weight outside of the cutoff space defined by $\Pi^{N_c}_{\beta_i, M_B}$, where $\overline{\Pi^{N_c}_{\beta_i, M_B}}:= \mathbbm{1}-\Pi^{N_c}_{\beta_i, M_B}$. The generalized version of the dual SDP for $W_i = \Tr{\rho_{B_{T}} \overline{\Pi^{N_c}_{\beta_i, M_B}}}$ given in the proof of Theorem 5 in Ref.\cite{Upadhyaya_2021} reads
\begin{align*}
    &\min_{\vec{y}} y_0 + \sum_{k=1}^{M_B} y_k \langle \hat{\Gamma}_{i,k} \rangle + y_k^{\text{sq}} \langle \hat{\Gamma}^{\text{sq}}_{i,k} \rangle\\
    &\text{s.t.: }\\
    & ~y_0 \mathbbm{1}_{B_{T}} +\sum_{k=1}^{M_B} y_k \hat{\Gamma}_{i,k}  + y_k^{\text{sq}}\hat{\Gamma}^{\text{sq}}_{i,k} - \overline{\Pi^{N_c}_{\beta_i, M_B}} \geq 0\\
    &~\vec{y} \in \mathbb{R}^{2M_B+1}.
\end{align*}
The positivity of the operator on the RHS of the first constraint is equivalent to all eigenvalues of this operator being non-negative. Taking into account the structure of this operator, we observe that $\forall k \in \{1,..., M_B\}:~ y_k + y_k^{\text{sq}} \geq 0$ which implies $y_k \geq -y_k^{\text{sq}}$. Additionally, we identify the following candidates for the smallest eigenvalue
\begin{align*}
    &\lambda(m_1,...,m_{M_B-1}) =y_0 -1+ \sum_{k=1}^{M_B} m_k y_k + m_k^2 y_k^{\text{sq}},
\end{align*}
where $m_1,...,m_{M_B} \in \{0,...,N_c+1\}$, with $m_1 = ... = m_B = N_c+1$ being the worst-case candidate. It follows readily that $y_k = -\frac{1}{M_B N_C(N_C+1)}$ and $y_k^{\text{sq}} = \frac{1}{M_B N_C(N_C+1)}$ together with $y_0 = 0$ solves all conditions on the eigenvalues simultaneously. 
Noting that $\left\langle \left(\hat{n}_{\beta_i}\right)_{B_k} \ast \mathbbm{1}_{B_{T \setminus k}}\right\rangle = \langle \hat{n}_{\beta_i} \rangle$ and $\left\langle \left(\hat{n}^2_{\beta_i}\right)_{B_k} \ast \mathbbm{1}_{B_{T \setminus k}}\right\rangle = \langle \hat{n}^2_{\beta_i} \rangle$ we obtain
\begin{equation}
    W_i = \frac{\langle \hat{n}^2_{\beta_i}  \rangle - \langle \hat{n}_{\beta_i}  \rangle}{N_c(N_c+1)},
\end{equation}
which coincides with the single Bob case. Thus, we obtain for the weight, in accordance with the single Bob case from Ref. \cite{Upadhyaya_2021},
\begin{equation}
   W = \sum_{i=0}^{N_{\textrm{St}}-1} p_i \frac{\langle \hat{n}^2_{\beta_i}  \rangle - \langle \hat{n}_{\beta_i}  \rangle}{N_c(N_c+1)}. 
\end{equation}
Having discussed and specified all parts that differ from the single Bob case, we finally can formulate the relevant optimization problem, in analogy to the single Bob case in Ref.~\cite{Upadhyaya_2021}, but using the improved correction from Ref.~\cite{Upadhyaya_Thesis_2021}. We define the objective function 
\begin{equation}
    f(\rho) := H(Z|E)_{\Phi(\rho)}
\end{equation}
and arrive at the following minimization problem
\begin{equation}
    \begin{aligned}
        \text{min}_{\rho}& f(\rho)\\
        \text{s.t.: }&\\
        & 1-W \leq \Tr{\rho} \leq 1,\\
        & \frac{1}{2} \left|\left|\text{Tr}_{B_{T}}\left[ \rho \right]-\rho_A \right|\right|_1 \leq \sqrt{W},\\
        & \Tr{\rho \frac{1}{p_i}\hat{\Gamma}_{i,k}} \leq \gamma_{i,k},\\
        & \Tr{\rho \frac{1}{p_i}\hat{\Gamma}_{i,k}^{\text{sq}}} \leq \gamma_{i,k}^{\text{sq}},\\
        & \rho \geq 0,
    \end{aligned}
\end{equation}
where $\rho_A = \sum_{m,n=0}^{N_{\text{St}}-1} \sqrt{p_mp_n} \bra{\alpha_n}\ket{\alpha_m} \ket{m}\!\!\bra{n}$, the $\hat{\Gamma}_{i,k}$ and $\hat{\Gamma}_{i,k}^{\text{sq}}$ have been defined in Eq. (\ref{eq:GammaHats}) and $\gamma_{i,k} := \langle \hat{\Gamma}_{i,k} \rangle = \langle \hat{n}_{\beta_i}\rangle$, $\gamma_{i,k}^{\text{sq}} := \langle \hat{\Gamma}_{i,k}^{\text{sq}} \rangle = \langle \hat{n}^2_{\beta_i}\rangle$ for $i\in \left[N_{\text{St}}\right]_{-1}, k\in \left[M_B\right]$ 
This optimization problem can be solved using the numerical two-step algorithm from Refs.~\cite{Coles_2012, Winick_2018}. Denoting the found optimum by $\rho^*$, we finally obtain for the key rate
\begin{equation}
    R^{\infty} = f(\rho^*) - \delta_{EC}^{\text{leak}} - \Delta(W),
\end{equation}
where $\Delta(W)$ is the improved correction term given in Eq.~(\ref{eq:WeightCorrectionTerm}). 

To consider imperfect detectors and/or trusted detection noise, the POVM for the ideal heterodyne measurement must be replaced with the POVM of the non-ideal heterodyne measurement which affects the region-operators in the objective function and the observables in the constraints, as explained in \cite{Lin_2020}.

\subsection{Extension to the finite-size regime}
The next step is the extension to the finite-size regime, which applies the techniques of Refs. \cite{Kanitschar_2023, Kanitschar_Thesis_2022}. Particularly, the key-generating Bob now needs to additionally perform an energy test to rigorously handle the numerical cutoff required for the numerics and perform an acceptance test, which replaces parameter estimation. For details, we refer the reader to Ref. \cite{Kanitschar_2021}. Applying this method, we obtain a composable secure key of length $\ell$,
\begin{equation}
\begin{aligned}
    \frac{\ell}{N} \leq \frac{n}{N} &\left[ \min_{\rho \in \mathcal{S}^{\mathrm{E\&A}}} H(X|E')_{\Phi(\rho)} - \delta(\bar{\epsilon}) - \Delta(W) \right]     \\
    & - \delta_{\mathrm{leak}}^{\mathrm{EC}} - \frac{2}{N} \log_2\left( \frac{1}{\epsilon_{\mathrm{PA}}} \right),
\end{aligned}
\end{equation}
where $\delta^{\mathrm{EC}}_{\mathrm{leak}}$ takes the classical error correction cost into account, $\Delta(W)$ is given in Eq.~(\ref{eq:WeightCorrectionTerm}), $\delta(\epsilon) := 2 \log_2\left( \mathrm{rank}(\rho_X)+3 \right) \sqrt{\frac{\log_2\left(2/\epsilon \right)}{n}}$ and $\mathcal{S}^{\mathrm{E\&A}}$ is defined below, with security parameter $\epsilon_{\mathrm{EC}} + \max\left\{\frac{1}{2}\epsilon_{\mathrm{PA}}+\bar{\epsilon}, \epsilon_{\mathrm{ET}}+\epsilon_{\mathrm{AT}} \right\}$ against collective i.i.d. attacks. The security parameters $\epsilon_{\mathrm{ET}}, \epsilon_{\mathrm{AT}}, \epsilon_{\mathrm{EC}}$ and $\epsilon_{\mathrm{PA}}$ refer to the energy test, the acceptance test and the classical subroutines handling error-correction and privacy amplification, whereas $\overline{\epsilon}$ is a virtual parameter related to smoothing, that can be chosen freely. Finally, the set $\mathcal{S}^{\mathrm{E\&A}}$ is a result of the testing procedure and given by
\begin{align*}
    \mathcal{S}^{\mathrm{E\&A}} := &\left\{ \sigma \in \mathcal{S}^{\mathrm{ET}}: \right.\\
    &\left.~ \mathrm{Tr}_{E}\left[\sigma\right] \text{ is not $\epsilon_{\mathrm{AT}}$-securely filtered in the AT} \right\},
\end{align*}
where 
\begin{align*}
    \mathcal{S}^{\mathrm{ET}} := &\left\{\sigma \in \mathcal{D}_{\leq}(\mathcal{H}_A \otimes \mathcal{H}_B^{n_c} \otimes \mathcal{H}_E): \textrm{ purification of } \rho_{AB}   \right. \\
    & \left. \land \mathrm{Tr}_{E}\left[\sigma\right] \text{ is not $\epsilon_{\mathrm{ET}}$-securely filtered in the ET} \right\}.
\end{align*}

%--------------------------------------------------------
%                       Further Results
%--------------------------------------------------------

\section{Further Analysis}\label{sec:FurtherResults}
We illustrate our findings for the quadrature phase-shift keying (QPSK) protocol (see Section \ref{sec:QPSKProtocol}). However, we emphasize that this is just a choice for illustration purposes since our findings in the previous sections are general. Furthermore, to ease presentation, we consider the symmetric case. This means the channels connecting Alice with each of the Bobs are all characterized by the same loss parameter $\eta$ and, in case of noisy channels, the excess-noise $\xi$ is distributed evenly among all channels. 

In the following, we assume standard optical fibers with $0.2$dB loss per kilometer in our loss model. We consider the case where Alice and Bob perform reverse reconciliation, in which Alice corrects her errors according to the information she receives from Bob via the classical channel. We assume a constant reconciliation efficiency of $\beta = 95\%$ to allow comparability with existing works on single-user QKD. However, we want to note that achieving a continuous error-correction efficiency over a wide range of transmissivities, hence for different orders of signal-to-noise ratio, is challenging. To date, it is unclear if this can be achieved with current error-correction routines. We refer the reader to Ref. \cite{Leverrier_2023} for an in-depth discussion. We start our conversation with the loss-only case, followed by the general case of lossy \& noisy channels.

\subsection{Key Rates for the Loss-Only Channel} \label{sec:AnaResults}
Note that the $1$ Bob curve corresponds to the single-user case, already known from earlier works \cite{Lin_2019, Kanitschar_2021}. 

\begin{figure}
\centering
\includegraphics[width=0.5\columnwidth]{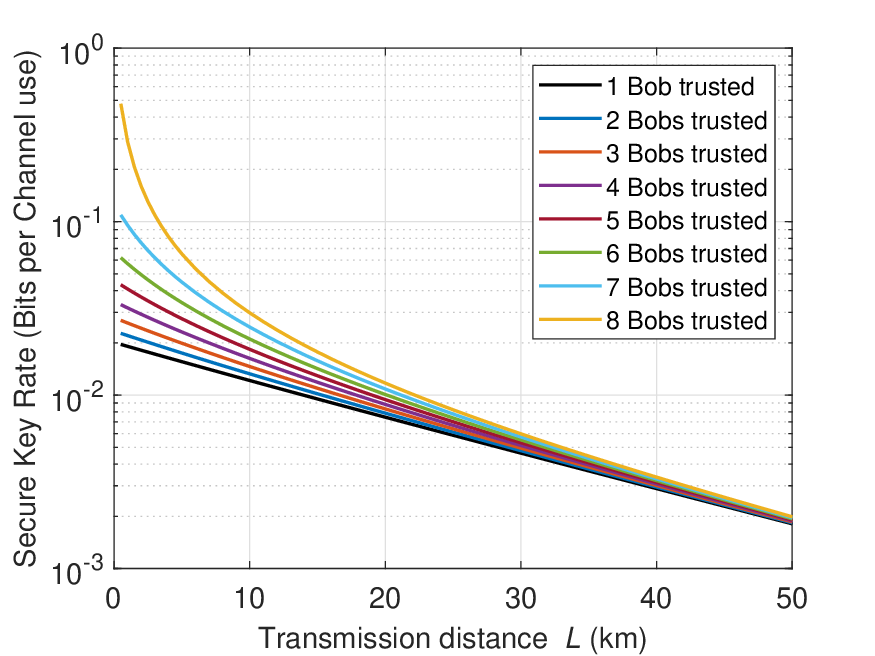}
\caption{QPSK key rate vs transmission distance for the loss-only channel and eight Bobs. Scenario b), i.e., some Bobs are trusted by Alice. \label{fig:Ana_8Bobs}}
\end{figure}

In addition to the examinations in the main part, in Figure \ref{fig:Ana_8Bobs}, we fix the total number of Bobs to $N_B = 8$ and (from bottom to top) trust between $1$ and $8$ Bobs and examine the advantage of trust. The curves where we do not trust all other Bobs ($1$ Bob trusted) are equivalent to the corresponding curves in Figures \ref{fig:AnaAllTrusted} and \ref{fig:AnaAllUntrusted}. We observe very different behaviors for low transmission distances, while the obtained key rate curves are qualitatively the same for medium to large distances. 

These observations already highlight that trusting some or even all Bobs helps to increase the key rate only for low to medium transmission distances. This, however, is well aligned with the main use case of CV-QKD, which distributes secure quantum keys at a high rate in urban areas and campus networks. In particular, trusting some or even all the Bobs can be a practical and realistic use case for campus networks. However, while in such use cases, it might be reasonable to trust some or even all other Bobs not to collaborate with Eve, the Bobs nevertheless might aim for a private key concerning Eve and all other Bobs. We call this scenario, aligning with trust-scenario d). We illustrate this case for $N_B=2,4,6$ and $16$ Bobs in Figure~\ref{fig:Ana_fully_private}, where we optimized $\alpha$ in the interval $[0.3,10]$. We also note that the optimal $\alpha$ for scenario d) differs slightly for low distances. The solid lines correspond to the fully trusted case, while the dotted lines represent the secure key rates in scenario d). We want to highlight that both the solid and the dotted line correspond to the key rate per Bob. While this is equivalent to the total key rate generated by Alice for the trusted case (solid lines), where only one of the Bob generates a key, this differs by a factor of $M_B=N_B$ from the total key rate generated by Alice in scenario d) (dotted lines). This is because, in the latter case, each of the Bobs' keys is completely decoupled from all the other Bobs and, therefore, can be used independently by Alice. 
Considering the key rate per Bob, we observe a noticeable difference in key rate only for very low transmission distances (i.e., in the low-loss case); this small gap even shrinks further for increasing numbers of Bobs. For $N_B>=32$, no discernible difference can be observed through visual inspection; thus, we omitted the corresponding curves. This highlights that obtaining keys according to scenario d) w.r.t. the other Bobs comes at only a tiny cost in key rate per Bob, while it significantly increases the key rate generated by Alice. However, in this picture, we have neglected one effect. Namely, the error-correction phase for scenario d) differs from those in the other scenarios since the syndrome transmitted by one Bob might leak information to the other Bobs. One way to avoid this is by encrypting the syndrome. In the ideal case, where all routines work without errors, this does not come with additional cost since the key is the same length as the syndrome, whose size is subtracted in the key rate formula anyway. If the error correction fails, this procedure uses up the key, thus decreasing the effective key rate. However, since in the asymptotic setting, all subroutines are assumed to work perfectly, we do not consider this error-dependent effect. According to Ref. \cite{Bian_2023}, parts of the lost key can be regained by a recycling procedure that uses the fact that the key used in aborted rounds is `hidden' by the partially random syndrome. While we do not discuss this method in further detail here, we note that it is, in principle, compatible with our work and refer to Ref. \cite{Bian_2023} for further details.

\begin{figure}
\centering
\includegraphics[width=0.4\columnwidth]{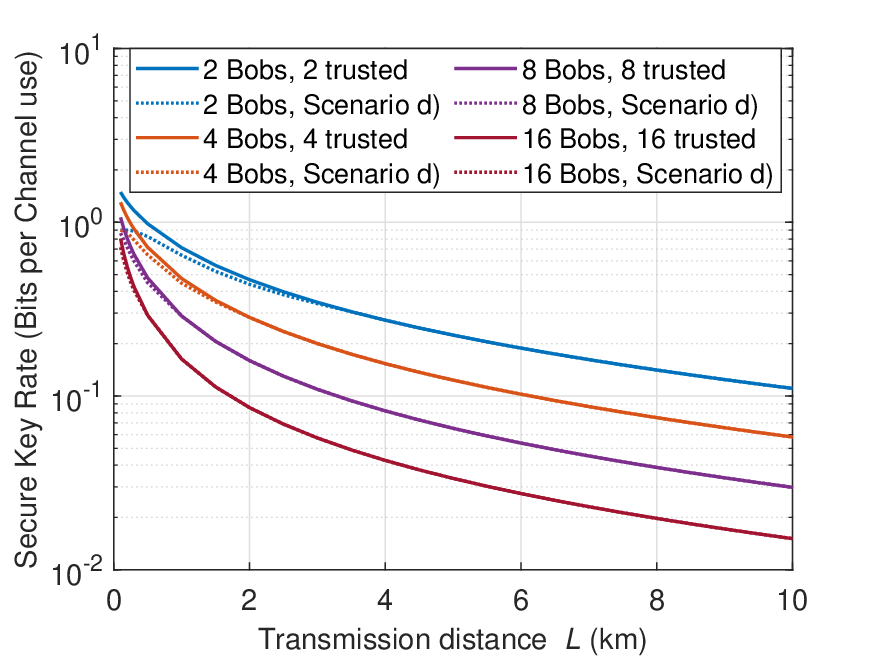}
\caption{QPSK key rate vs transmission distance for the loss-only channel and scenario d), and the trusted scenario c). \label{fig:Ana_fully_private}}
\end{figure}

\begin{figure}
\centering
\includegraphics[width=0.4\columnwidth]{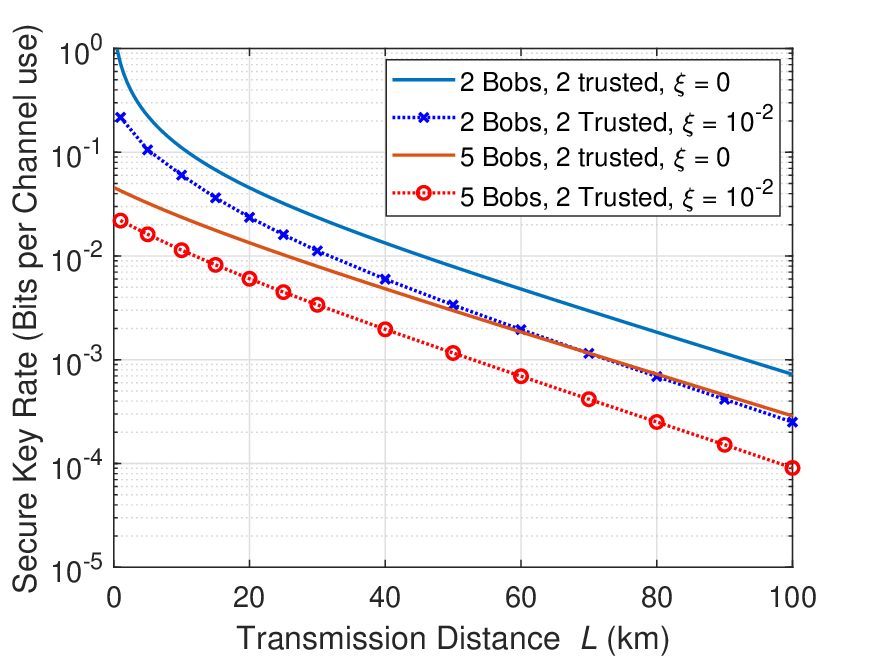}
\caption{QPSK key rate vs transmission distance for the lossy \& noisy channel with $\xi=10^{-2}$. Solid lines are reference curves for the loss-only channel. \label{fig:KR_vs_L_xi_1e-2}}
\end{figure}

\subsection{Key Rates for the Lossy \& Noisy Channel}\label{sec:NumResultsAsympt}
While the assumption of the loss-only channel helped to simplify the security analysis and allowed for analytical solutions, fast evaluations, and qualitative statements, it is far from reality. Thus, we require a security argument for the general lossy \& noisy case. However, the results from the previous section will serve as a helpful benchmark and upper bound for the results in the present section. 

Therefore, we apply the security argument presented in Section \ref{sec:genMultBobs}, using the numerical two-step method by Refs.~\cite{Coles_2016, Winick_2018}, to obtain secure key rates. While our arguments in Section \ref{sec:genMultBobs} apply to a general number of Bobs, calculating secure key rates in this formulation includes solving semi-definite programs. Therefore, computational constraints and limitations need to be taken into account. Let us denote by $n_c$ the largest (displaced) Fock state included in the basis of our finite-dimensional cutoff space. Then the numerical dimension of the problem scales as $\mathcal{O}(N_{\textrm{St}}(n_c+1)^{M_B})$. Thus, it is obvious that to keep the problem numerically feasible, we need to choose a significantly smaller cutoff dimension than for the single Bob case. We found $n_c = 7$ (thus $\textrm{dim}(\mathcal{H}^{\textrm{fin}}) = 8$) being a reasonable compromise. Using the dimension reduction method leads to a higher weight $W$, hence a larger correction term $\Delta(W)$, but after all, still valid lower bounds on the secure key rate. Furthermore, with our current computational resources, we had to restrict our numerical evaluations to a maximum of two trusted Bobs, $M_B = 2$, while keeping the total number of Bobs in principle arbitrary. This is not a fundamental limitation but mainly comes from very RAM-demanding SDP solvers. We expect the number of feasible Bobs can be increased by an improved implementation, using different solvers, better hardware, and/or improved problem formulations that exploit symmetries. Generally, calculating secure key rates for QKD protocols involving high-dimensional systems is known to be RAM-demanding and computationally challenging \cite{Kanitschar_2023_HD}. However, since this work aims to demonstrate our method, arbitrary $N_B$ and fixing $M_B = 2$ is sufficient for our purposes. We demonstrate the soundness of our implementation in Appendix \ref{APDX:Verification_Method} and proceed with our results for an excess noise of $\xi=10^{-2}$ (assumed as preparation noise, inserted into the system from Alice's source). For our plots, for fixed transmission distance $L$, we optimize over the coherent state amplitude $\alpha$ via fine-grained search in steps of $\Delta\alpha = 0.01$. Our security argument for the lossy and noisy case was general and applied to an arbitrary number of Bobs.

In Figure \ref{fig:KR_vs_L_xi_1e-2}, we show the secure key rate for a fixed noise parameter of $\xi = 0.01$ as a function of the transmission distance. As discussed in the previous paragraph, we fix the number of trusted Bobs to $M_B = 2$ and examine a total number of Bobs of $N_B = 2$ (blue curves) and $N_B = 5$ (red curves), although there is no fundamental limit on $N_B$. For comparison, we also plot the corresponding curves for the loss-only scenario ($\xi = 0$) in solid lines. As expected, we observe lowered key rates due to the additional excess noise, but qualitatively similar behavior with loss/transmission distance as for the loss-only case.

\section{Verification of the Numerical Method}\label{APDX:Verification_Method}
This section briefly compares our numerical implementation to the analytical loss-only results. In Figure \ref{fig:Examination_Numerics}, we fix the transmission distance to $L=20$km and the coherent state amplitude to $\alpha=0.87$ (which was the optimal $\alpha$ found for the loss-only case) and plot the obtained secure key rate for decreasing values of excess noise. The key rates converge to the analytical result for $\xi = 0$, with negligible difference for $\xi = 10^{-6}$ and lower. We note that we cannot choose $\xi$ exactly zero for numerical reasons. Next, in Figure \ref{fig:KR_VS_L_Loss_Only}, we fix the excess noise to $\xi = 10^{-6}$ and plot the secure key rate for 2 Bobs where both are trusted and 3 Bobs where 2 of them are trusted for various distances in the interval $[0\textrm{km}, 100\textrm{km}]$ and optimize over $\alpha$ in steps of $\Delta\alpha = 0.01$ around the analytical optimum. Again, we compare our numerical results with the analytical result for the loss-only case and observe excellent accordance. Both Figures show the excellent alignment of our numerical results for the lossy \& noisy case with the analytical loss-only key rates.
\begin{figure}
\subfloat[\label{fig:Examination_Numerics} Key rate vs excess noise $\xi$ compared to loss-only channel for $2$ Bobs where both are trusted.]{
\includegraphics[width=0.40\textwidth]{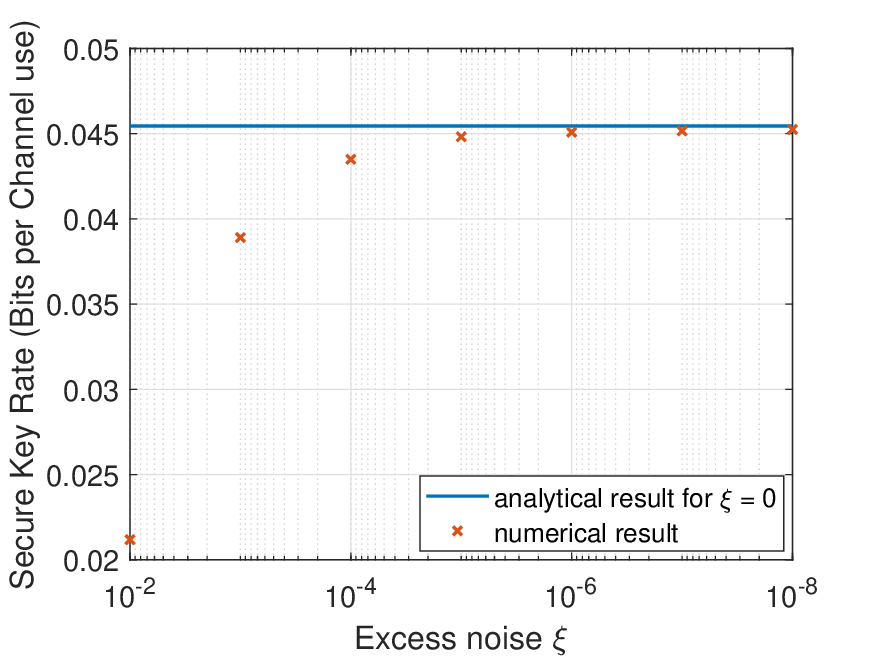}}
\subfloat[\label{fig:KR_VS_L_Loss_Only} Key rate vs transmission distance for $\xi=10^{-6}$ compared to loss-only channel.]{
\includegraphics[width=0.40\textwidth]{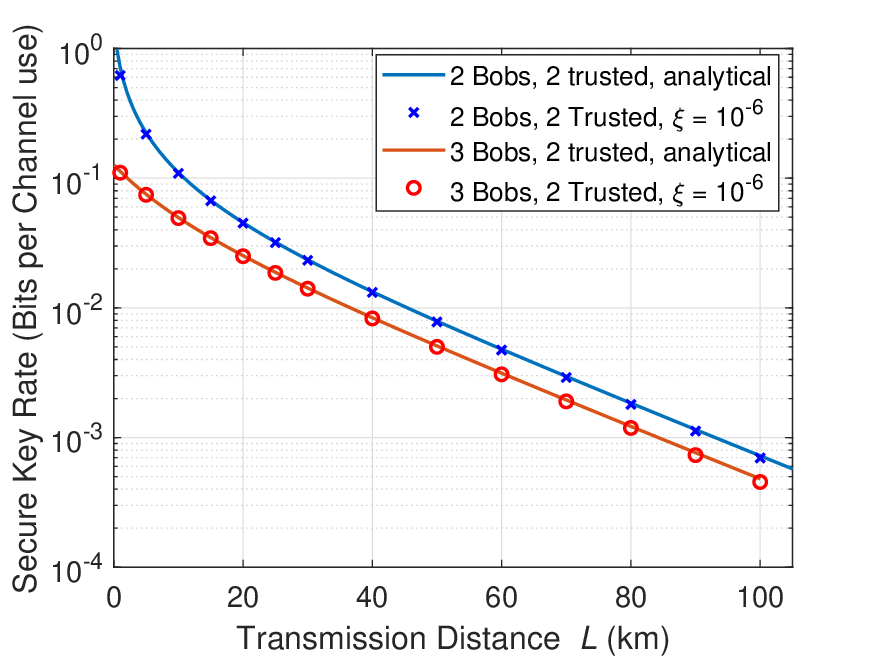}}
\makeatletter\long\def\@ifdim#1#2#3{#2}\makeatother
\caption{Comparison of the numerical method to analytical benchmark for a QPSK protocol. }
\end{figure}

\twocolumngrid
\bibliography{Bibliography}
	
\end{document}